\pgfplotsset{compat=1.5}
\newenvironment{proof}{\noindent{\bf Proof : \ }}{\hfill$\Box$\par\medskip}
\newtheorem{theorem}{Theorem}[section]
\newtheorem{corollary}[theorem]{Corollary}
\newtheorem{lemma}[theorem]{Lemma}
\newtheorem{definition}[theorem]{Definition}
\newenvironment{proofof}[1]{\begin{trivlist} \item {\bf Proof
#1:~~}}
  {\qed\end{trivlist}}
\newcommand{\namedref}[2]{\hyperref[#2]{#1~\ref*{#2}}}
\newcommand{\thmlab}[1]{\label{thm:#1}}
\newcommand{\thmref}[1]{\namedref{Theorem}{thm:#1}}
\newcommand{\lemlab}[1]{\label{lem:#1}}
\newcommand{\lemref}[1]{\namedref{Lemma}{lem:#1}}
\newcommand{\corlab}[1]{\label{cor:#1}}
\newcommand{\corref}[1]{\namedref{Corollary}{cor:#1}}
\newcommand{\seclab}[1]{\label{sec:#1}}
\newcommand{\secref}[1]{\namedref{Section}{sec:#1}}
\newcommand{\figlab}[1]{\label{fig:#1}}
\newcommand{\figref}[1]{\namedref{Figure}{fig:#1}}
\newcommand{\alglab}[1]{\label{alg:#1}}
\renewcommand{\algref}[1]{\namedref{Algorithm}{alg:#1}}
\newcommand{\deflab}[1]{\label{def:#1}}
\newcommand{\defref}[1]{\namedref{Definition}{def:#1}}
\def \aadapt    {\mdef{\mathsf{A}_{adaptive}}}
\def \asingle    {\mdef{\mathsf{A}_{single}}}
\def \time    {\mdef{\mathsf{time}}}
\def \superconc    {\mdef{\mathsf{superconc}}}
\def \grates    {\mdef{\mathsf{grates}}}
\def \depth    {\mdef{\mathsf{depth}}}
\def \indeg    {\mdef{\mathsf{indeg}}}
\def \parents    {\mdef{\mathsf{parents}}}
\def \ancestors    {\mdef{\mathsf{ancestors}}}
\def \unique    {\mdef{\mathsf{UNIQUE}}}
\def \potential    {\mdef{\mathsf{PotentialParents}}}
\def \partition    {\mdef{\mathsf{BlockPartition}}}
\def \crpartition    {\mdef{\mathsf{CR-BlockPartition}}}
\def \A    {\mdef{\mathcal{A}}}
\def \d    {\mdef{\delta}}
\def \q    {\mdef{\mathbf{q}}}
\def \G    {\mdef{\mathbb{G}}}
\def \H    {\mdef{\mathcal{H}}}
\def \MHF    {\mdef{\mathsf{MHF}}}
\def \Enc    {\mdef{\mathsf{Enc}}}
\def \Dec    {\mdef{\mathsf{Dec}}}
\def \lp    {\mdef{\mathsf{lp}}}
\def \LP    {\mdef{\mathsf{LP}}}
\def \bad    {\mdef{\mathsf{bad}}}
\def \costly    {\mdef{\mathsf{COSTLY}}}
\def \setup    {\mdef{\textbf{setup}}}
\def \challenge    {\mdef{\textbf{challenge}}}
\def \guess    {\mdef{\textbf{guess}}}
\def \mhfeval  {\mdef{\mathsf{MHF.Eval}}}
\def \valiant    {\mdef{\mathsf{Valiant}}}
\def \request    {\mdef{\mathsf{request}}}
\def \store    {\mdef{\mathsf{store}}}
\def \load    {\mdef{\mathsf{load}}}
\def \True    {\mdef{\mathsf{True}}}
\def \False    {\mdef{\mathsf{False}}}
\def \invariant    {\mdef{\mathsf{invariant}}}
\def \Write    {\mdef{\mathsf{Write}}}
\def \Read    {\mdef{\mathsf{Read}}}
\def \overlay    {\mdef{\mathsf{overlay}}}
\def \SC    {\mdef{\mathsf{SC}}}
\newcommand{\PPr}[1]{\ensuremath{\mathbf{Pr}\left[#1\right]}}
\newcommand{\PPPr}[2]{\ensuremath{\underset{#1}{\mathbf{Pr}}\left[#2\right]}}
\newcommand{\EEx}[2]{\ensuremath{\underset{#1}{\mathbb{E}}\left[#2\right]}}
\renewcommand{\O}[1]{\ensuremath{\mathcal{O}\left(#1\right)}}
\newcommand{\eps}{\epsilon}
\newcommand{\mdef}[1]{{\ensuremath{#1}}\xspace}  
\newcommand{\myfunc}[1]{\mdef{\mathsf{#1}}}      
\DeclareMathOperator*{\poly}{poly}
\newcommand{\superscript}[1]{\ensuremath{^{\mbox{\tiny{\textit{#1}}}}}\xspace}
\def \th {\superscript{th}}     
\def \etal{{\it et~al.}}
\def \negl     {\mdef{\myfunc{negl}}}                
\newcommand{\ignore}[1]{}
\def \cc       {\mdef{\mathsf{cc}}}
\newcommand{\Peb}{{\cal P}} 
\newcommand{\pPeb}{\Peb^{\parallel}}
\def \pcc {\cc} 
\newif\ifnotes\notestrue 
\newcommand{\samson}[1]{\textcolor{purple}{{\bf (Samson:} {#1}{\bf ) }} \marginpar{\tiny\bf
             \begin{minipage}[t]{0.5in}
               \raggedright S:
            \end{minipage}}}       
\newcommand{\jeremiah}[1]{\textcolor{red}{{\bf (Jeremiah:} {#1}{\bf ) }} \marginpar{\tiny\bf
             \begin{minipage}[t]{0.5in}
               \raggedright J:
            \end{minipage}}    }
\newcommand{\samson}[1]{}
\newcommand{\jeremiah}[1]{}
\renewcommand*{\@fnsymbol}[1]{\textcolor{mahogany}{\ensuremath{\ifcase#1\or *\or \dagger\or \ddagger\or
 \mathsection\or \triangledown\or \mathparagraph\or \|\or **\or \dagger\dagger
   \or \ddagger\ddagger \else\@ctrerr\fi}}}
\providecommand{\email}[1]{\href{mailto:#1}{\nolinkurl{#1}\xspace}}
\definecolor{electricpurple}{rgb}{0.75, 0.0, 1.0}
\definecolor{fluorescentpink}{rgb}{1.0, 0.08, 0.58}
\definecolor{mahogany}{rgb}{0.75, 0.25, 0.0}
\definecolor{darkblue}{rgb}{0.0, 0.0, 0.55}
\definecolor{darkpastelgreen}{rgb}{0.01, 0.75, 0.24}
\definecolor{darkgreen}{rgb}{0.0, 0.2, 0.13}
\definecolor{darkgoldenrod}{rgb}{0.72, 0.53, 0.04}
\definecolor{darkred}{rgb}{0.55, 0.0, 0.0}
\title{Computationally Data-Independent Memory Hard Functions}
\author{Mohammad Hassan Ameri\thanks{Department of Computer Science, Purdue University. 
E-mail: \email{mameriek@purdue.edu}}
\and
Jeremiah Blocki\thanks{Department of Computer Science, Purdue University. 
E-mail: \email{jblocki@purdue.edu}}
\and
Samson Zhou\thanks{School of Computer Science, Carnegie Mellon University. 
E-mail: \email{samsonzhou@gmail.com}}}
\date{\today}
\begin{document}
\maketitle

\begin{abstract}
Memory hard functions (MHFs) are an important cryptographic primitive that are used to design egalitarian proofs of work and in the construction of moderately expensive key-derivation functions resistant to brute-force attacks. 
Broadly speaking, MHFs can be divided into two categories: data-dependent memory hard functions (dMHFs) and data-independent memory hard functions (iMHFs). 
iMHFs are resistant to certain side-channel attacks as the memory access pattern induced by the honest evaluation algorithm is independent of the potentially sensitive input e.g., password. 
While dMHFs are potentially vulnerable to side-channel attacks (the induced memory access pattern might leak useful information to a brute-force attacker), they can achieve higher cumulative memory complexity (CMC) in comparison than an iMHF. 
In particular, any iMHF that can be evaluated in $N$ steps on a sequential machine has CMC {\em at most} $\O{\frac{N^2\log\log N}{\log N}}$. 
By contrast, the dMHF scrypt achieves maximal CMC $\Omega(N^2)$ --- though the CMC of scrypt would be reduced to just $\O{N}$ after a side-channel attack. 

In this paper, we introduce the notion of computationally data-independent memory hard functions (ciMHFs). 
Intuitively, we require that memory access pattern induced by the (randomized) ciMHF evaluation algorithm appears to be independent from the standpoint of a computationally bounded eavesdropping attacker --- even if the attacker selects the initial input. 
We then ask whether it is possible to circumvent known upper bound for iMHFs and build a ciMHF with CMC $\Omega(N^2)$. 
Surprisingly, we answer the question in the affirmative when the ciMHF evaluation algorithm is executed on a two-tiered memory architecture (RAM/Cache). 

We introduce the notion of a $k$-restricted dynamic graph to quantify the continuum between unrestricted dMHFs $(k=n)$ and iMHFs ($k=1$). 
For any $\eps > 0$ we show how to construct a $k$-restricted dynamic graph with $k=\Omega(N^{1-\eps})$ that provably achieves maximum cumulative pebbling cost $\Omega(N^2)$. 
We can use $k$-restricted dynamic graphs to build a ciMHF provided that cache is large enough to hold $k$ hash outputs and the dynamic graph satisfies a certain property that we call ``amenable to shuffling''.  
In particular, we prove that the induced memory access pattern is indistinguishable to a polynomial time attacker who can monitor the locations of read/write requests to RAM, but not cache. 
We also show that when $k=o\left(N^{1/\log\log N}\right)$, then any $k$-restricted graph with constant indegree has cumulative pebbling cost $o(N^2)$. 
Our results almost completely characterize the spectrum of $k$-restricted dynamic graphs. 
\end{abstract}

\section{Introduction}
Memory hard functions (MHFs)~\cite{AbadiBMW05, Percival09} are a central component in the design of password hashing functions~\cite{BiryukovDK15}, egalitarian proofs of work~\cite{DworkN92}, and moderately hard key-derivation functions~\cite{Percival09}. 
In the setting of password hashing, the objective is to design a function that can be computed relatively quickly on standard hardware for honest users, but is prohibitively expensive for an offline attacker to compute millions or billions of times while checking each password in a large cracking dictionary. 
The first property allows legitimate users to authenticate in a reasonable amount of time, while the latter goal discourages brute-force offline guessing attacks, even on low-entropy secrets such as passwords, PINs, and biometrics. 
The objective is complicated by attackers that use specialized hardware such as Field Programmable Gate Arrays (FPGAs) or Application Specific Integrated Circuits (ASICs) to significantly decrease computation costs by several orders of magnitude, compared to an honest user using standard hardware. 
For example, the Antminer S17, an ASIC Bitcoin miner exclusively configured for SHA256 hashes, can compute up to 56 trillion hashes per second, while the rate of many standard CPUs and GPUs are limited to 200 million hashes per second and 1 billion hashes per second, respectively. 

Memory hard functions were developed on the observation that memory costs such as chip area tend to be equitable across different architectures. 
Therefore, the cost of evaluating an ideal ``egalitarian'' function would be dominated by memory costs. 
Blocki \etal~\cite{BlockiHZ18} argued that key derivation functions without some form of memory hardness provide insufficient defense against a economically motivated offline attacker under the constraint of reasonable authentication times for honest users. 
In fact, most finalists in the 2015 Password Hashing Competition claimed some form of memory hardness~\cite{ForlerLW14, BiryukovDK15, SimplicioAASB15}. 
To quantify these memory costs, memory hardness~\cite{Percival09} considers the cost to build, obtain, and empower the necessary hardware to compute the function. 
One particular metric heavily considered by recent cryptanalysis~\cite{AlwenB16,AlwenBP17,AlwenBH17,AlwenB17,BlockiZ17} is cumulative memory complexity (CMC)~\cite{AlwenS15}, which measures the amortized cost of any parallel algorithm evaluating the function on several distinct inputs. 
Despite known hardness results for quantifying~\cite{BlockiZ18} or even approximating~\cite{BlockiLZ19} a function's CMC, even acquiring asymptotic bounds provide automatic bounds for other attractive metrics such as space-time complexity~\cite{LengauerT82} or energy complexity~\cite{RenD17,BlockiRZ18}.

\paragraph{Data-Dependent vs. Data-Independent Memory Hard Functions.}
At a high level, memory hard functions can be categorized into two design paradigms: data-dependent memory hard functions (dMHFs) and data-independent memory hard functions (iMHFs). 
dMHFs induce memory access patterns that depend on the input, but can achieve high memory hardness with potentially relatively easy constructions~\cite{AlwenCPRT17}. 
However, dMHFs are also vulnerable to side-channel attacks due to their inherent data dependent memory access patterns~\cite{Bernstein05}. 
Examples of dMHFs include scrypt~\cite{Percival09}, Argon2d~\cite{BiryukovDK16} and Boyen's halting puzzles~\cite{Boyen07}. 
On the other hand, iMHFs have memory access patterns that are independent of the input, and therefore resist certain side-channel attacks such as cache timing~\cite{Bernstein05}. 
Examples of iMHFs include 2015 Password Hashing Competition (PHC) winner Argon2i~\cite{BiryukovDK15}, Balloon Hashing~\cite{BonehCS16} and DRSample~\cite{AlwenBH17}. 
iMHFs with high memory hardness can be more technically challenging to design, but even more concerning is the inability of iMHFs to be maximally memory hard. 

Alwen and Blocki~\cite{AlwenB16} proved that the CMC of any iMHF running in time $N$ is at most $\O{\frac{N^2\log\log N}{\log N}}$, while the dMHF scrypt has cumulative memory complexity $\Omega(N^2)$~\cite{AlwenCPRT17}, which matches the maximal amount and cannot be obtained by any iMHF. 
However, the cumulative memory complexity of a dMHF can be greatly decreased through a side-channel attack, if an attacker has learned the memory access pattern induced by the true input. 
Namely, a brute-force attacker can preemptively quit evaluation on a guess $y$ once it is clear that the induced memory access pattern on input $y$ differs from that on the true input $x$. For example, the cumulative memory complexity of scrypt after a side-channel attack is just $\O{N}$. 

Ideally, we would like to obtain a family of memory hard functions with cumulative memory complexity $\Omega(N^2)$ without any vulnerability to side-channel attacks. 
A natural approach would be some sort of hybrid between data-dependent and data-independent modes, such as Argon2id, which runs the MHF in data-independent mode for $\frac{N}{2}$ steps before switching to data-dependent mode for the final $\frac{N}{2}$ steps. 
Although the cumulative memory complexity is the maximal $\Omega(N^2)$ if there is no side-channel attack, the security still reduces to that of the underlying iMHF (e.g., Argon2i) if there is a side-channel attack. Hence even for a hybrid mode, the cumulative memory complexity is just $\O{\frac{N^2\log\log N}{\log N}}$ (or lower) in the face of a side-channel attack. 
Thus in this paper we ask:
\begin{quote}
In the presence of side-channel attacks, does there exist a family of functions with $\Omega(N^2)$ cumulative memory complexity?
\end{quote}

\subsection{Our Contributions}
Surprisingly, we answer the above question in the affirmative for a natural class of side-channel attacks that observe the read/write memory locations. 
We introduce the concept of computationally data-independent memory hard functions to overcome the inability of data-independent memory hard functions to be maximally memory hard~\cite{AlwenB16} without the common side-channel vulnerabilities of data-dependent memory hard functions~\cite{Bernstein05}. 
Our constructions work by randomly ``shuffling'' memory blocks in cache before they are stored in RAM (where the attacker can observe the locations of read/write requests). 
Intuitively, each time $\mhfeval(x)$ is executed the induced memory access pattern will appear different due to this scrambling step. The goal is to ensure that an attacker can not even distinguish between the observed memory access pattern on two known inputs $x \neq y$. 

Towards this goal we define $k$-restricted dynamic graphs as a tool to quantify the continuum between dMHFs and iMHFs. Intuitively, in a $k$-restricted dynamic graph $G = (V=[N],E)$ we have $\parents(v) = \{v-1, r(v)\}$ where the second (data-dependent) parent $r(v) \in R_v$ must be selected from a fixed (data-independent) restricted set $R_v \subseteq V$ of size $|R_v| \leq k$. When $k=1$ the function is an iMHF (the parent $r(v) \in R_v$ of each node $v$ is fixed in a data-independent manner) and when $k=N$ the function is an unrestricted dMHF --- scrypt and Argon2d are both examples of unrestricted dMHFs. Intuitively, when $k$ is small it becomes easier to scramble the labels $R_v$ in memory so that the observed memory access patterns on two known inputs $x \neq y$ are computationally indistinguishable.

We then develop a graph gadget that generates a family of ciMHFs using $k$-restricted graphs. Using this family of ciMHFs, we characterize the tradeoffs between the value of $k$ and the overall cumulative memory cost of $k$-restricted graphs.

\paragraph{Impossibility Results for Small $k$.}
Since $k$-restricted graphs correspond to iMHFs for $k=1$, and it is known that $\cc(\G)=\O{\frac{N^2\log\log N}{\log N}}$ for any family $\G$ of iMHFs~\cite{AlwenB16}, then one might expect that it is impossible to obtain maximally memory hard ciMHFs for small $k$. 
Indeed, our first result shows that this intuition is correct; we show that for any $k=o\left(N^{1/\log\log N}\right)$, then any family of $k$-restricted graphs $\G$ with constant indegree has $\cc(\G)=o(N^2)$. 
\begin{theorem}
\thmlab{thm:informal:lower}
Let $\G$ be any family of $k$-restricted dynamic graphs with constant $\indeg(\G)$. 
Then 
\[\cc(\G)=\O{\frac{N^2}{\log\log N}+N^{2-1/2\log\log N}\sqrt{k^{1-1/\log\log N}}}.\]
Thus for $k=o\left(N^{1/\log\log N}\right)$, we have $\cc(\G)=o(N^2)$.
\end{theorem}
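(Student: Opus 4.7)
The plan is to adapt the depth-reduction and pebbling argument of Alwen and Blocki to the dynamic setting. For a family $\G$ of $k$-restricted dynamic graphs, associate the static \emph{union graph} $G^*=(V=[N],E^*)$ whose edge set contains every potential edge, i.e.\ $(u,v)\in E^*$ iff $u\in R_v\cup\{v-1\}$. Since $\indeg(\G)=\O{1}$ and $|R_v|\le k$, the union graph has $N$ vertices and maximum indegree $\O{k}$. Every realization $G$ arising at runtime is a subgraph of $G^*$, so a single node-deletion set $S$ that reduces the depth of $G^*$ works simultaneously for \emph{every} realization.

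Next I would invoke a $k$-aware version of the recursive depth-reduction lemma: by applying Valiant's argument $\log\log N$ times to $G^*$, for any target depth $d$ one obtains a set $S\subseteq V$ of size $e$ with $\depth(G^*-S)\le d$, where $e$ and $d$ obey a trade-off of the form $e\cdot(\log(N/d))^{\log\log N}\lesssim Nk$. The extra factor of $k$ relative to Alwen-Blocki reflects that $G^*$ carries $\O{Nk}$ edges rather than $\O{N}$.

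Given such $S$, I would pebble any realization $G\subseteq G^*$ by partitioning $[N]$ into $N/d$ blocks of size $d$, keeping pebbles on $S$ throughout, and within each block re-computing any needed vertex $r(v)\notin S$ in parallel by running a depth-$d$ pebbling of $G-S$ from the pebbles on $S$. This is possible because $\depth(G-S)\le\depth(G^*-S)\le d$, and the data-dependent parent $r(v)\in R_v$ is revealed exactly when $v$ is being pebbled and is by then either on $S$ or on one of the $\O{d}$ newly placed pebbles. The strategy uses $\O{e+d}$ pebbles at every time step, so its cumulative pebbling cost is $\O{N(e+d)}$.

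Finally I would optimize over $d$. The trade-off above produces two dominant regimes: when the $d$-term dominates, we get $\O{N^2/\log\log N}$ (relevant when $k$ is large, so the $e$-term is crushed by the $d$-term); when the $e$-term dominates, we get the $k$-dependent bound $\O{N^{2-1/(2\log\log N)}\sqrt{k^{1-1/\log\log N}}}$. The ``$o(N^2)$'' corollary for $k=o\bigl(N^{1/\log\log N}\bigr)$ is then immediate by a short calculation. The main technical obstacle will be carefully threading the factor of $k$ through the $\log\log N$ iterations of Valiant's lemma and solving the resulting two-parameter optimization to recover precisely the stated expression; a smaller subtlety is verifying that re-pebbling $r(v)\in R_v$ can always be done within the depth budget of $G-S$, which follows because the realization sits inside $G^*$ and therefore inherits its depth bound.
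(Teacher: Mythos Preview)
Your proposal has a genuine gap in the cost accounting, and it also diverges from the paper's approach in a way that makes the optimization fail.

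The central problem is the claim that the strategy ``uses $\O{e+d}$ pebbles at every time step, so its cumulative pebbling cost is $\O{N(e+d)}$.'' When you reach node $v$ in a block and discover $r(v)\notin S$, you propose to recover $r(v)$ by ``running a depth-$d$ pebbling of $G-S$ from the pebbles on $S$.'' But a depth-$d$ parallel pebbling of $G-S$ takes $d$ rounds with up to $N$ pebbles per round (cost $Nd$ each time), and you cannot do this once per block in advance because $r(v)$ is only revealed after $v-1$ is pebbled. If instead you try to re-pebble just $r(v)$ with $O(d)$ pebbles via a DFS-style sequential strategy, the \emph{time} can be exponential in $d$ (the ancestor set of $r(v)$ in $G-S$ is a depth-$d$ DAG of constant indegree, not a path). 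Either way the cumulative cost is not $\O{N(e+d)}$. There is also no reason to iterate Valiant's lemma $\log\log N$ times; the paper applies it once, and the $\log\log N$ in the bound comes from the choice of the parameter $\eta$.

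The paper's argument avoids this by decoupling the block size $g$ from the depth bound $d$ and by never re-pebbling during a light phase. It applies Valiant's lemma to the constant-indegree \emph{realized} graph (not your union graph), obtaining $e=\O{N\eta/\log N}$ independent of $k$. The $k$-dependence enters only through keeping pebbles on all of $\potential([x_i+1,x_i+g])$, at most $gk$ nodes, throughout light phase $i$; with those pebbles present, each of the $g$ light-phase steps is a single legal move. A balloon phase then re-pebbles everything in $d$ parallel rounds (cost $\le Nd$) once per block. The total cost is $\O{Ne+Ngk+N^2 d/g}$, and optimizing first $g=N/\sqrt{k2^{\eta}}$ and then $\eta=\Theta((\log N+\log k)/\log\log N)$ yields the stated bound. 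Your union-graph depth reduction inflates $e$ by a factor of $k$ without eliminating the need to have the actual parent available at the moment $v$ is pebbled, so it buys nothing; the key new idea you are missing is to pre-pebble the entire set $\bigcup_{v\in\text{block}}R_v$ during the balloon phase and carry those $gk$ pebbles through the light phase.
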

We prove this result in \thmref{thm:attack} and \corref{cor:attack} in \secref{sec:generic} by generalizing ideas from the pebbling attack of Alwen and Blocki~\cite{AlwenB16} against any iMHF to $k$-restricted dynamic graphs graphs. 
The pebbling attack of Alwen and Blocki~\cite{AlwenB16} exploited the fact that any constant indegree DAG $G$ is somewhat depth-reducible e.g., we can always find a set $S \subseteq V(G)$ of size $e=\O{\frac{N \log \log N}{\log N}}$ such that any path in $G-S$ has length at most $d=\frac{N}{\log^2 N}$. 
The attack then proceeds in a number of \emph{light phases} and \emph{balloon phases}, where the goal of light phase $i$ is to place pebbles on the interval $[ig+1,(i+1)g]$, for some parameter $g$ to be optimized. 
At the same time, the attacker discards pebbles on all nodes $v$ unless $v \in S$ or unless $v$ is a parent of one of the next $g$ nodes $[ig+1,(i+1)g]$ that we want to pebble. 
Once light phase $i$ is completed, balloon phase $i$ uses the pebbles on $S$ to recover all previously discarded pebbles. 
Note that balloon phase $i$ thus promises that pebbles are placed on the parents of the nodes $[(i+1)g+1,(i+2)g]$, so that light phase $i+1$ can then be initiated and so forth.

One key difference is that we must maintain pebbles on all $gk$ nodes $u \in \bigcup_{v \in [ig+1,(i+1)g]} R_v$ that are ``potential parents'' of the next $g$ nodes $[ig+1,(i+1)g]$. 
The total cost of the pebbling attack is $\O{eN+gkN + \frac{N^2d}{g}}$, which is identical to \cite{AlwenB16} when $k=1$ for $(e,d)$-reducible DAGs. 
In general for small values of $k$, the dynamic pebbling strategy can still achieve cumulative memory cost $o(N^2)$ after optimizing for $g$. 

\paragraph{Maximally Hard $k$-restricted dMHF.}
In \secref{sec:construct}, we show how to construct a $k$-restricted dynamic graph for $k=\O{N^{1-\eps}}$, which has cumulative pebbling cost $\Omega(N^2)$ for any constant $\eps > 0$. 
Intuitively, our goal is to force the pebbling strategy to maintain $\Omega(N)$ pebbles on the graph for $\Omega(N)$ steps or pay a steep penalty. 
In particular, we want to ensure that if there are $o(N)$ pebbles on the graph at time $i$ then the cumulative pebbling cost to advance a pebble just $2k = \O{N^{\eps}}$ steps is at least $\Omega(N^{2-\eps})$ with high probability. 
This would imply that the pebbling strategy either keeps $\Omega(N)$ pebbles on the graph for $\Omega(N)$ steps or that the pebbling strategy pays a penalty of $\Omega(N^{2-\eps})$ at least $\Omega\left(\frac{N}{k}\right) = \Omega(N^{\eps})$ times. 
In either case the cumulative pebbling cost will be $\Omega(N^2)$.

One of our building blocks is the ``grates'' construction of Schnitger~\cite{Schnitger83} who, for any $\eps>0$, showed how to construct a constant indegree DAG $G_\eps$ that is $(e,d)$-depth robust graph with $e=\Omega(N)$ and $d=\Omega(N^{1-\eps})$. 
Our second building block is the superconcentrator~\cite{Pippenger77, LengauerT82} graph. 
By overlaying the DAG $G_{\eps}$ with a superconcentrator, we can spread out the data-dependent edges on the top layer of our graph to ensure that (with high probability) advancing a pebble $2k = \O{N^{\eps}}$ steps on the top layer starting from a pebbling configuration with $o(N)$ pebbles on the graph requires us to repebble an $(e,d)$-depth robust graph with $e=\Omega(N)$ and $d=\Omega(N^{1-\eps})$. 
This is sufficient since Alwen \etal~\cite{AlwenBP17} showed that the cumulative pebbling cost of any $(e,d)$-depth robust graph is at least $ed$.

\paragraph{Open Question:} 
We emphasize that we only show that any dynamic pebbling strategy for our $k$-restricted dynamic graph has cumulative cost $\Omega(N^2)$. 
This is not quite the same as showing that our dMHF has CMC $\Omega(N^2)$ in the parallel random oracle model. 
For static graphs, we know that the CMC of an iMHF is captured by the cumulative pebbling cost of the underlying DAG~\cite{AlwenS15}. 
We take the dynamic pebbling lower bound as compelling evidence that the corresponding MHF has maximum cumulative memory cost. 
Nevertheless, proving (or disproving) that the CMC of a dMHF is captured by the cumulative cost of the optimal dynamic pebbling strategy for the underlying dynamic graph is still an open question that is outside the scope of the current work.

\paragraph{ciMHF Implementation Through Shuffling.} 
The only problem is that the above $k$-restricted dynamic graph is actually a data-dependent construction; once the input $x$ is fixed, the memory access patterns of the above construction is completely deterministic! 
Thus a side-channel attacker that obtains a memory access pattern will possibly be able to distinguish between future inputs. 
Our solution is to have a hidden random key $K$ for each separate evaluation of the password hash. 
The hidden random key $K$ does not alter the hash value of $x$ in any manner, so we emphasize that there is no need to know the value of the hidden key $K$ to perform computation. 
However, each computation using a separate value of $K$ induces a different memory access pattern, so that no information is revealed to side-channel attackers looking at locations of read/write instructions. 

Let $L$ be a set of the last $N$ consecutive nodes from our previous graph construction, which we suppose is called $G_0$. 
We form $G$ by appending a path of length $N$ to the end of $G_0$. 
We introduce a gadget that partitions the nodes in $L$ into blocks $B_1,B_2,\ldots,B_{N/k}$ of size $k$ each. 
We then enforce that for $i\in[N]$ and $j=i\mod{k}$, the $i\th$ node in the final $N$ nodes of $G$ has a parent selected uniformly at random from $B_{j+1}$, depending on the input $x$. 
Thus to compute the label of $i$, the evaluation algorithm should know the labels of all nodes in $B_{j+1}$. 

We allow the evaluation algorithm to manipulate the locations of these labels so that the output of the algorithm remains the same, but each computation induces a different memory access pattern. 
Specifically, the random key $K$ induces a shuffling of the locations of the information within each block of the block partition gadget. 
Thus if the size of each block is sufficiently large, then with high probability, two separate computations of the hash for the same password will yield distinct memory access patterns, effectively computationally data-independent. 
Then informally, a side-channel attacker will not be able to use the memory access patterns to distinguish between future inputs. 

In fact, this approach works for a general class of graphs satisfying a property that we call ``amenable to shuffling''. 
We characterize the properties of the dynamic graphs that are amenable to shuffling in \secref{sec:shuffle} and show that $k$-restricted dynamic graphs that are ameanble to shuffling can be used in the design of MHFs to yield computationally data-independent sequential evaluation algorithms. 
\begin{theorem}
\thmlab{thm:informal:shuffle}
For each DAG $G$ that is amenable to shuffling, there exists a computationally data-independent sequential evaluation algorithm computing a MHF based on the graph $G$ that runs in time $\O{N}$. 
(Informal, see \thmref{thm:shuffling:cimhf}.)
\end{theorem}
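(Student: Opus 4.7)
The plan is to build on the block-partition gadget already sketched in the introduction. Because $G$ is amenable to shuffling, its data-dependent parent sets $R_v$ are organized into blocks $B_1,\ldots,B_{N/k}$ of size at most $k$, and the honest evaluator can process the graph in an order guaranteeing that, whenever node $v$ needs its data-dependent parent $r(v)\in R_v$, the entire block containing $R_v$ is already cache-resident. My evaluation algorithm would sample a secret key $K$, use it as the seed of a pseudorandom permutation $\pi_K$ on $[k]$, and process blocks one at a time: load $B_j$ from RAM into cache, permute its $k$ labels according to $\pi_K$, perform every hash computation whose data-dependent parent lies in $B_j$ while the shuffled block is still in cache, and finally write the shuffled block back to its RAM region. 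Correctness is immediate because the evaluator knows $\pi_K$ and can translate any logical parent index to its current shuffled RAM address, while $K$ never influences the final output label.

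For the running time, the amenable-to-shuffling property bounds how many blocks each node touches and ensures that every required block is loaded at most $\O{1}$ amortized times. Each of the $N$ nodes therefore contributes $\O{1}$ cache work for hashing, and each block $B_j$ contributes $\O{|B_j|}$ work for permuting and writing back. Summing over nodes and blocks yields the claimed $\O{N}$ sequential running time.

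The substantive claim is computational data-independence. I would argue it in two stages. First, hybrid-replace $\pi_K$ by a truly uniform permutation on each block. Under this modification, the RAM trace factors block by block: between the load and write-back of $B_j$, all data-dependent reads happen inside cache and are invisible to the attacker, so the only RAM events are the block-level loads dictated by the public structure of $G$ together with the write-backs of freshly shuffled labels. By the shuffling-amenable property, no data-dependent RAM address is ever accessed outside the window in which its block is cached, so the distribution of the entire RAM trace is a function of $G$ and the random permutations alone, hence identical for any two inputs $x\neq y$. Second, a standard $N/k$-step hybrid argument reintroduces the pseudorandom permutation $\pi_K$, at the cost of only a negligible change in the distinguishing advantage of any polynomial-time adversary, which is exactly what is needed for computational indistinguishability.

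The hard part will be the first stage of the indistinguishability argument: extracting from the definition of ``amenable to shuffling'' in \secref{sec:shuffle} precisely the invariant that every data-dependent RAM read for a node $v$ occurs inside the cache-resident window of the block containing $R_v$, and that no shuffled address is revealed to the attacker before being written back in permuted form. Once this invariant is made precise, the remaining pieces, namely correctness, running time, and the PRF-to-random-permutation hybrid, become essentially bookkeeping.
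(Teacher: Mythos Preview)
Your plan has a genuine gap, and it stems from a misreading of what ``amenable to shuffling'' guarantees. You propose to hide every data-dependent read inside cache by loading the parent block $B_j$, then ``perform[ing] every hash computation whose data-dependent parent lies in $B_j$ while the shuffled block is still in cache.'' But the nodes whose parents lie in $B_j$ form the group $G_j$, and \defref{def:shuffling} says nothing about $G_j$ being a contiguous interval of the path through the last $N$ nodes. In the paper's own construction (\figref{fig:graph:permutation}) the groups are deliberately interleaved in round-robin fashion: node $\alpha N + j$, node $\alpha N + j + N/k$, node $\alpha N + j + 2N/k$, $\ldots$ all belong to $G_j$. Because every node $i$ in the last layer depends on $i-1$ via the path edge, you cannot compute all of $G_j$ without first computing nodes from every other group. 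Your scheme therefore either violates the dependency order or forces each block to be reloaded $k$ times, destroying the $\O{N}$ running-time claim. The invariant you identify as ``the hard part''---that every data-dependent RAM read occurs inside the cache-resident window of its block---is simply not extractable from \defref{def:shuffling}, and is false for the intended construction.

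The paper's argument is different in kind: it does \emph{not} try to hide the data-dependent reads. The shuffling phase permutes each block $O_j$ once (in cache, under LRU/FIFO) and writes it back to RAM; the data-dependent phase then processes nodes $i=\alpha N+1,\ldots,(\alpha+1)N$ strictly in order, and each parent read is a \emph{visible} RAM access to the shuffled address $\Enc(j\circ K, r(i))$. Indistinguishability comes not from invisibility but from the No-Collision property (item~(6) of \defref{def:shuffling}), which you never invoke: since the $k$ nodes of $G_j$ have $k$ distinct parents in the size-$k$ set $O_j$, the sequence of shuffled addresses the attacker observes for block $j$ is a uniformly random permutation of $O_j$ when $\Enc$ is a truly random permutation, and hence is identically distributed for any two inputs $x_0,x_1$. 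Your PRP-to-random-permutation hybrid is then exactly the right second step, but the first step needs to be rebuilt around this ``visible but uniformly random'' mechanism rather than the cache-hiding one.
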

We believe that our techniques for converting graphs that are amenable to shuffling to ciMHFs may be of independent interest. 

Finally, we provide a version of our dMHF with $\Omega(N^2)$ cumulative memory complexity that is amenable to shuffling. 
Combining this maximally hard $k$-restricted dMHF using a DAG that is amenable to shuffling with \thmref{thm:informal:shuffle}, we obtain a maximally hard ciMHF.
\begin{theorem}
\thmlab{thm:informal:upper}
Let $0<\eps<1$ be a constant and $k=\Omega(N^\eps)$. 
Then there exists a family $\G$ of $k$-restricted graphs with $\cc(\G) = \Omega(N^2)$ that is amenable to shuffling.
\end{theorem}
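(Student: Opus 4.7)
The plan is to assemble the desired family $\G$ by starting from the $\Omega(N^2)$ construction sketched in \secref{sec:construct} and then verifying that this construction already matches (or can be minimally modified to match) the block structure required by the shuffling gadget. Write $\G_0$ for the base graph built by overlaying Schnitger's constant-indegree grates (which is $(e,d)$-depth robust with $e=\Omega(N)$ and $d=\Omega(N^{1-\eps})$) on the first $\Theta(N)$ nodes, and then appending the path with data-dependent edges on the final $N$ vertices, using the block-partition gadget with blocks $B_1,\dots,B_{N/k}$ of size $k$. Since $k = \Omega(N^{\eps})$ and the construction from \secref{sec:construct} requires $k = \O{N^{1-\eps'}}$ for a sufficiently small $\eps' > 0$, I would pick $\eps' < 1-\eps$ so that both the lower bound proof and the block size requirement are simultaneously satisfiable.

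First, I would show that $\G$ is amenable to shuffling. This reduces to checking the structural conditions in \secref{sec:shuffle}: namely, that for every data-dependent node $v$ on the appended path its restricted set $R_v$ equals an entire block $B_{j+1}$ where $j = v \bmod k$, and that the evaluation algorithm only needs contents of this single block in cache when computing $\lab(v)$. By construction the second parent of each such $v$ is drawn uniformly from one block, so $R_v = B_{j+1}$ as required; the grates/superconcentrator interior does not touch the last path and its restricted sets are singletons (effectively $k=1$), which are trivially amenable. I would then argue that a permutation of labels within any single block does not change the value of $\lab(v)$ when addressed through the random key $K$, which is exactly the property needed in \thmref{thm:informal:shuffle}.

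Second, I would verify that the lower bound $\cc(\G) = \Omega(N^2)$ from \secref{sec:construct} is preserved. The lower bound proof only needs two facts: that the grates layer is $(e,d)$-depth robust with $ed = \Omega(N^{2-\eps'})$, and that the superconcentrator overlay ensures that, from any configuration with $o(N)$ pebbles, advancing the top path by $2k$ steps with constant probability forces repebbling of a depth-robust subgraph of the grates. Neither fact depends on \emph{how} the data-dependent parent of a top-path node is chosen inside its block $R_v$, only on the fact that $R_v$ is a sufficiently spread-out set of size $k$. Since the block partition used for shuffling is compatible with (indeed, imposed by) the superconcentrator-induced spreading, the \cite{AlwenBP17} bound $\cc \ge ed = \Omega(N^{2-\eps'})$ times the $\Omega(N^{\eps'})$ phases of advancing gives $\cc(\G) = \Omega(N^2)$.

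The main obstacle will be the joint realizability of the two requirements: the shuffling gadget wants each $R_v$ to coincide with a \emph{single, contiguous} block of the partition, while the superconcentrator overlay from \secref{sec:construct} is chosen precisely to distribute potential data-dependent parents \emph{across} the whole final layer so that the repebbling argument goes through. The technical heart of the proof will be choosing a block layout and a routing of the superconcentrator edges so that (i) each $B_j$ is itself sufficiently ``well-spread'' in the underlying grates (guaranteeing depth-robust repebbling when any block is revisited), and (ii) the parents referenced by the $k$ consecutive top-path nodes lie in one block (guaranteeing cache amenability). I expect this to go through by interleaving the grates output nodes across blocks in a fixed, data-independent pattern and then sampling intra-block parents data-dependently; the counting argument for depth-robustness is unaffected, while each block still supplies exactly one parent to each residue class modulo $k$ on the final path, as the shuffling gadget demands.
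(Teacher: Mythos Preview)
Your plan has the right overall shape---take the \secref{sec:construct} construction, check the shuffling conditions, and observe that the $\Omega(N^2)$ argument survives---but it misses the one structural requirement that forces a nontrivial modification. In \defref{def:shuffling}, property~(6) demands that for every group $G_j \subseteq L$ the event $\unique_j$ holds with probability~$1$: distinct nodes $u,v$ in the same group must receive \emph{distinct} data-dependent parents $r(u)\neq r(v)$. In the plain block partition extension of \secref{sec:construct}, each of the $k$ nodes in a group draws its parent uniformly and independently from a block of size $k$, so collisions occur with overwhelming probability and the construction is \emph{not} amenable to shuffling as stated. The paper fixes this by replacing $\partition_k$ with the collision-resistant variant $\crpartition_k$ (\secref{sec:cr}): blocks are enlarged to size $2k$ and the parent of the $p$\th{} node in group $j$ is taken to be $\Enc(x\circ j,p)$ for a keyed permutation $\Enc$ of $[2k]$, guaranteeing injectivity within each group. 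One must then redo the $\Omega(N^2)$ analysis (\thmref{thm:cc:permutation}) for this variant, which goes through essentially verbatim since a permutation of $[2k]$ restricted to $k$ inputs is still ``random enough'' for the missed-costly-index argument.

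Separately, the obstacle you flag in your last paragraph is not the real one. The blocks $O_j$ in the output of $G_3$ are already contiguous intervals, and the superconcentrator sits \emph{below} them (between the first grates layer and the second), so there is no tension between ``each $R_v$ is a single block'' and ``the superconcentrator spreads ancestors.'' The contiguity of $R_v$ and the spreading of its ancestors through $\superconc$ are orthogonal features of the layered construction; no interleaving or rerouting is needed.
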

We prove \thmref{thm:informal:upper} in \secref{sec:implementation}, introducing the necessary formalities for computationally data-independent memory hard functions and the underlying systems model. 
Our results in \thmref{thm:informal:lower} and \thmref{thm:informal:upper} almost completely characterize the spectrum of $k$-restricted graphs. 
In fact, for a graph $G$ drawn uniformly at random from our distribution $\G$ in \thmref{thm:informal:upper} and any pebbling strategy $S$, not only do we have $\EEx{G\sim\G}{\cc(S,G)}=\Omega(N^2)$, but we also have $\cc(S,G)=\Omega(N^2)$ with high probability. 
\section{Preliminaries}
We use the notation $[N]$ to denote the set $\{0,1,\ldots,N-1\}$. 
For two numbers $x$ and $y$, we use $x\circ y$ to denote their concatenation. 

Given a directed acyclic graph (DAG) $G=(V,E)$ and a node $v \in V$, we use $\parents_G(v)= \{u~:~(u,v) \in E\}$ to denote the parents of node $v$. 
We use $\ancestors_G(v) = \bigcup_{i \geq 1} \parents_G^i(v)$ to denote the set of all ancestors of $v$ --- here, $\parents_G^2(v) = \parents_G\left(\parents_G(v) \right)$ and $\parents^{i+1}_G(v) = \parents_G\left( \parents^i_G(v)\right)$. 
We use $\indeg(v) = \left| \parents(v)\right|$ to denote the number of incoming edges into $v$ and define $\indeg(G)=\underset{v\in V}{\max}\,\indeg(v)$. 
Given a set $S \subseteq V$, we use $G-S$ to refer to the graph obtained by deleting all nodes in $S$ and all edges incident to $S$. 
We use $\depth(G)$ to denote the number of nodes in the longest directed path in $G$. 

\begin{definition}
A DAG $G=(V,E)$ is \emph{$(e,d)$-reducible} if there exists a subset $S \subseteq V$ of size $|S|\le e$ such that any directed path $P$ in $G$ of length $d$ contains at least one node in $S$. 
We call such a set $S$ a \emph{depth-reducing set}. 
If $G$ is not $(e,d)$-reducible, then we say that $G$ is \emph{$(e,d)$-depth robust}. 
\end{definition}

For a DAG $G=(V=[N],E)$, we use $G_{\le i}$ to denote the subgraph of $G$ induced by $[i]$. 
In other words, $G_{\le i}=(V',E')$ for $V'=[i]$ and $E'=\{(a,b)\in E\,|\,a,b\le i\}$. 

\paragraph{The Parallel Random Oracle Model.}
We review the parallel random oracle model (pROM), as introduced by Alwen and Serbinenko~\cite{AlwenS15}. 
There exists a probabilistic algorithm $\A^{\H}$ that serves as the main computational unit, where $\A^{\H}$ has access to an arbitrary number of parallel copies of an oracle $\H$ sampled uniformly at random from an oracle set $\mathbb{H}$ and proceeds to do computation in a number of rounds. 
In each round $i$, $\A^{\H}$ maintains a state $\sigma_i$ along with initial input $x$. 
$\A^{\H}$ determines a batch of queries $\q_i$ to send to $\H$, receives and processes the responses to determine an updated state $\sigma_{i+1}$. 
At some point, $\A^{\H}$ completes its computation and outputs the value $\A^{\H}(x)$. 

We say that $\A^{\H}$ computes a function $f_{\H}$ on input $x$ with probability $\eps$ if $\PPr{\A^{\H}(x)=f_{\H}}\ge\eps$, where the probability is taken over the internal randomness of $\A$. 
We say that $\A^{\H}$ uses $t$ running time if it outputs $\A^{\H}(x)$ after round $t$. 
In that case, we also say $\A^{\H}$ uses space $\sum_{i=1}^t|\sigma_i|$ and that $\A^{\H}$ makes $q$ queries if $\sum_{i=1}^t\le q$. 

\paragraph{The Ideal Cipher Model}
In the ideal cipher model (ICM), there is a publicly available random block cipher, which has a $\kappa$-bit key $K$ and an $N$ bit input and output. 
Equivalently, all parties, including any honest parties and adversaries, have access to a family of $2^\kappa$ independent random permutations of $[N]$. 
Moreover for any given key $K$ and $x\in[N]$, both encryption $\Enc(K,x)$ and decryption $\Dec(K,x)$ queries can be made to the random block cipher.

\paragraph{Graph Pebbling.} 
The goal of the (black) pebbling game is to place pebbles on all sink nodes of some input directed acyclic graph (DAG) $G=(V,E)$. 
The game proceeds in rounds, and each round $i$ consists of a number of pebbles $P_i\subseteq V$ placed on a subset of the vertices. 
Initially, the graph is unpebbled, $P_0=\emptyset$, and in each round $i\ge 1$, we may place a pebble on $v\in P_i$ if either all parents of $v$ contained pebbles in the previous round ($\parents(v)\subseteq P_{i-1}$) or if $v$ already contained a pebble in the previous round ($v\in P_{i-1}$). 
In the sequential pebbling game, at most one new pebble can be placed on the graph in any round (i.e., $\left|P_i \backslash P_{i-1} \right| \leq 1)$, but this restriction does not apply in the parallel pebbling game. 

We use $\pPeb_G$ to denote the set of all valid parallel pebblings of a fixed graph $G$. 
The \emph{cumulative cost} of a pebbling $P=(P_1,\ldots,P_t) \in \pPeb_G$ is the quantity $\cc(P):=|P_1|+\ldots+|P_t|$ that represents the sum of the number of pebbles on the graph during every round. 
The (parallel) \emph{cumulative pebbling cost} of the fixed graph $G$, denoted $\pcc(G) := \min_{P \in \pPeb_G} \cc(P)$, is the cumulative cost of the best legal pebbling of $G$. 

\begin{definition}[Dynamic/Static Pebbling Graph]
We define a \emph{dynamic pebbling graph} as a distribution $\G$ over directed acyclic graphs $G = (V=[N],E)$ with edges $E=\{(i-1,i): i \leq N\} \cup  \{(r(i),i): i \leq N\}$, where $r(i) < i-1$ is a randomly chosen directed edge. 
We say that an edge $(r(i),i)$ is {\em dynamic} if $r(i)$ is not chosen until a black pebbled is place on node $i-1$. 
We say that the graph is \emph{static} if none of the edges are dynamic. 
\end{definition}
We now define a labeling of a graph $G$. 
\begin{definition}
\deflab{def:labeling}
Given a DAG $G=(V=[N],E)$ and a random oracle function $H:\Sigma^*\rightarrow \Sigma^w$ over an alphabet $\Sigma$, we define the labeling of graph $G$ as $L_{G,H}:\Sigma^*\rightarrow\Sigma^*$.
In particular, given an input $x$ the $(H,x)$ labeling of $G$ is defined recursively by
\[L_{G,H,x}(v)=
\begin{cases}
H(v\circ x),&\indeg(v)=0\\
H\left(v\circ L_{G,H,x}(v_1)\circ\cdots\circ L_{G,H,x}(v_d)\right),&\indeg(v)>0,
\end{cases}\]
where $v_1,\ldots,v_d$ are the parents of $v$ in $G$, according to some predetermined lexicographical order. 
We define $f_{G,H}(x)= L_{G,H,x}(s_1)\circ\ldots\circ L_{G,H,x}(s_k)$, where $s_1,\ldots,s_k$ are the sinks of $G$ sorted lexicographically by node index. 
If there is a single sink node $s_G$ then $f_{G,H}(x)=L_{G,H,x}(s_G)$. 
We omit the subscripts $G,H,x$ when the dependency on the graph $G$ and hash function $H$ is clear.
For a distribution of dynamic graphs $\G$, we say $f_{\G,H}(x)=f_{G,H}(x)$ once a dynamic graph $G$ has been determined from the choice of $H$ and $x$. 
\end{definition}

\noindent
For a node $i$, we define $\potential(i)$ to be set $Y_i$ of minimal size such that $\PPr{r(i)\in Y_i}=1$.
We now define $k$-restricted dynamic graphs, which can characterize both dMHFs and iMHFs.

\begin{definition}[$k$-Restricted Dynamic Graph]
We say that a dynamic pebbling graph $\G$ is $k$-restricted if for all $i$, $\potential(i)\le k$.
\end{definition}

Observe that $k=1$ corresponds to an iMHF while $k=N$ corresponds to a dMHF. 
Hence, $k$-restricted dynamic graphs can be viewed as spectrum between dMHFs and iMHFs. 

We define the cumulative cost of pebbling a dynamic graph similar to the definition of cumulative cost of pebblings on static graphs.
We first require the following definition of a dynamic pebbling strategy:

\begin{definition}
[Dynamic Pebbling Strategy]
A \emph{dynamic pebbling strategy} $S$ is a function that takes as input
\begin{enumerate}
\item an integer $i\leq N$
\item an initial pebbling configuration $P_0^i\subseteq[i]$ with $i\in P_0^i$
\item a partial graph $G_{\leq i+1}$
\end{enumerate}
The output of $S(i, P_0^i, G_{\leq i+1})$ is a legal sequence of pebbling moves $P_1^i,\ldots,P_{r_i}^i$ that will be used in the next phase, to place a pebble on node $i+1$, so that $i+1 \in P_{r_i}^i\subseteq [i+1]$. 
Given $G \sim \G$ we can abuse notation and write $S(G)$ for the valid pebbling produced by $S$ on the graph $G$ i.e., $P_1^0,\ldots,P_{r_0}^0,P_{1}^1,\ldots,P_{r_1}^1,\ldots,P_{r_{1}}^{N-1},\ldots,P_{r_{N-1}}^{N-1}$. 
Here, $P_1^i,\ldots,P_{r_i}^i = S(i,P_0^i, G_{\leq i+1})$ where $P_0^i = P_{r_{i-1}}^{i-1}$ and for $i=1$ we set $P_0^0 = \emptyset$. 
\end{definition}

We thus define $\cc(S,G)$ to the pebbling cost of strategy $S$ when we sample a dynamic graph $G$ and $\cc(S,\G)=\EEx{G\sim\G}{\cc(S,G)}$. 
Finally, we define $\cc(\G)=\min_S\cc(S,\G)$, where the minimum is taken over all dynamic pebbling strategies $S$. 
More generally, we define $\cc(S,\G,\delta) = \max\{k\,:\,\PPPr{G\in\G}{\cc(S,G)\ge k}\ge 1-\delta\}$. 
Fixing $\delta$ to be some negligible function of $N$, we can define $\cc_{\delta}(\G)=\min_S\cc(S,\G,\delta)$.
\section{General Attack Against $k$-Restricted Graphs}
\seclab{sec:generic}
In this section, we describe a general attack against $k$-restricted graphs. 
We show that the attack incurs cost $o(N^2)$ for $k=o\left(N^{1/\log\log N}\right)$, proving that there is no maximally memory hard $k$-restricted graph for small $k$.

We first require the following formulation of Valiant's Lemma, which shows the existence of a subroutine $\valiant(G,e,d)$ to find a depth-reducing set $S$ of size at most $e$ within a graph $G$, for $e=\frac{\eta\d N}{\log(N)-\eta}$ and $d=\frac{N}{2^\eta}$, where $\eta>0$. 

\begin{lemma}[Valiant's Lemma]
\cite{Valiant77}
\lemlab{lem:valiant}
For any DAG $G=(V,E)$ with $N$ nodes, indegree $\d$, and $\eta>0$, there exists an efficient algorithm $\valiant(G,e,d)$ to compute a set $S$ of size $|S|\le e:=\frac{\eta\d N}{\log(N)-\eta}$ such that $\depth(G-S)\le d:=\frac{N}{2^\eta}$.
\end{lemma}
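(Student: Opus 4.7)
The plan is to prove Valiant's lemma via the classical bit-labeling technique: topologically label the nodes, assign each edge a ``level'' based on the binary representations of its endpoints, use pigeonhole to identify a small family of levels whose removal is cheap, and argue via a bit-projection lemma that removing those levels collapses the depth. First I would compute a topological depth labeling $h\colon V\to\{0,1,\ldots,N-1\}$, where $h(v)$ is the length of the longest path ending at $v$; then for every edge $(u,v)\in E$ I would define the label $\ell(u,v)\in\{1,\ldots,\lceil\log N\rceil\}$ to be the position of the most significant bit at which $h(u)$ and $h(v)$ differ when written as $\lceil\log N\rceil$-bit strings. Because $h(u)<h(v)$, this means $h(u)$ carries a $0$ and $h(v)$ carries a $1$ at bit $\ell(u,v)$, and they agree on every higher-order bit.

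The heart of the argument is a \emph{bit-projection lemma}: for any $I\subseteq\{1,\ldots,\lceil\log N\rceil\}$, let $S_I$ be the set of heads of all edges whose label lies in $I$; then $\depth(G-S_I)\le 2^{\lceil\log N\rceil-|I|}\le N/2^{|I|}$ (assuming $N$ is a power of $2$; otherwise padding loses only a constant). To see this, take any surviving path $v_0\to v_1\to\cdots\to v_k$ and project each $h(v_j)$ onto the $\lceil\log N\rceil-|I|$ bits outside $I$. Along any surviving edge of label $\ell\notin I$, bits above $\ell$ are preserved and bit $\ell$ itself flips from $0$ to $1$; since $\ell$ is the most significant \emph{projected} bit that changes, the projection strictly increases along the edge. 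Hence the projected sequence is injective into $\{0,1\}^{\lceil\log N\rceil-|I|}$, so $k+1\le 2^{\lceil\log N\rceil-|I|}$.

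A standard pigeonhole argument then finishes the proof: averaging over uniformly random size-$\eta$ subsets $I$ of $\{1,\ldots,\lceil\log N\rceil\}$, the expected number of edges whose label lies in $I$ equals $|E|\cdot\eta/\lceil\log N\rceil\le\eta\d N/(\log N-\eta)$, so some concrete $I$ achieves at most this many edges. Setting $S=S_I$ yields $|S|\le\eta\d N/(\log N-\eta)=e$ together with $\depth(G-S)\le N/2^\eta=d$, matching the statement. The efficient algorithm $\valiant(G,e,d)$ is then straightforward: compute $h$ by topological sort in $\O{N+|E|}$ time, bucket-sort edges by label, pick the $\eta$ smallest buckets, and return the union of their heads.

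The only real conceptual obstacle is verifying the projection lemma: one must carefully check that because $\ell(u,v)$ is the \emph{most significant} differing bit of $h(u)$ and $h(v)$, every bit of the projection above $\ell(u,v)$ is preserved, which is precisely what forces the projected value to genuinely increase along every surviving edge and hence bounds the path length. Everything else is routine counting and pigeonhole.
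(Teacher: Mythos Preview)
The paper does not supply its own proof of this lemma; it is merely stated with a citation to \cite{Valiant77} and then used as a black box in the attack of \thmref{thm:attack}. So there is nothing in the paper to compare your argument against.

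That said, your proposal is exactly the classical Valiant bit-labeling proof and is correct. A couple of small points worth tidying: (i) you implicitly treat $\eta$ as an integer when you speak of ``size-$\eta$ subsets $I$'', whereas the lemma as stated allows arbitrary $\eta>0$ and the paper later instantiates it with a non-integer value; rounding $\eta$ up to $\lceil\eta\rceil$ handles this with no real loss. (ii) Your inequality $|E|\cdot\eta/\lceil\log N\rceil\le\eta\delta N/(\log N-\eta)$ is fine since $\lceil\log N\rceil\ge\log N>\log N-\eta$, but note you are actually proving a tighter bound than the one stated; the $-\eta$ in the denominator of the paper's formulation is slack you do not need. (iii) Your projection lemma is stated and justified correctly: the crucial observation that all bits above $\ell(u,v)$ agree, while bit $\ell(u,v)$ flips $0\to 1$, is precisely what makes the projected values strictly increase along any surviving path.
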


The high level intuition of the generic attack is as follows. 
By Valiant's Lemma (\lemref{lem:valiant}), $G\sim\G$ is $(e,d)$-reducible for $e=\frac{\eta\d N}{\log(N)-\eta}$ and $d=\frac{N}{2^\eta}$. 
We will construct a dynamic pebbling strategy $\A$ that for all times $t$, maintains a depth-reducing set $S_t$ such that $\depth(G_t-S_t)\le d$, where $G_t$ is the portion of $G$ revealed after running $\A$ for time $t$, $G_t = G_{\leq i}$ for $i = 1+ \max \bigcup_{j=1}^t P_j$, where each $P_j\subseteq [N]$ represents the set of pebbled nodes during round $j$. 
Observe that for any $i$, $G_{\le i}$ is $(e,d)$-reducible and hence $G_t$ is also $(e,d)$-reducible for all times $t$. 
Thus, the depth reducing set $S_t$ has size at most $e$ for all times $t$ and can be computed by a subroutine $\valiant$, by \lemref{lem:valiant}.  
We now describe how $\A$ maintains this depth-reducing set through a series of \emph{light phases} and \emph{balloon phases}. 

We first set a parameter $g$ that we will eventually optimize. 
The goal of each light phase $i$ is to pebble the next $g$ nodes that have yet to be revealed. 
That is, if $x_i$ is the largest node for which $\A$ has placed a pebble at some point prior to light phase $i$, then the goal of light phase $i$ is to pebble the interval $[x_i+1,x_i+g]$. 
To begin light phase $i$ at some time $t_i$, we require that $(\potential([x_i+1,x_i+g])\cup S_{t_i})\subseteq P_{t_i}$ for some depth-reducing set $S_{t_i}$ of size at most $e$, such that $\depth(G_{t_i}-S_{t_i})\le d$. 
Once this pre-condition is met, then light phase $i$ simply takes $g$ steps to pebble $[x_i+1,x_i+g]$, since pebbles are already placed on $\potential([x_i+1,x_i+g])$. 
Hence, the post-condition of light phase $i$ at some time $u_i$ is pebbles on the node $x_i+g$ and some depth-reducing set $S_{u_i}$ of size at most $e$, such that $\depth(G_{u_i}-S_{u_i})\le d$. 

The goal of each balloon phase $i$ is to place pebbles on all revealed nodes of the graph, to meet the pre-condition of light phase $i+1$. 
To begin balloon phase $i$ at some time $r_i$, we first have a necessary pre-condition that pebbles are placed on some depth-reducing set $S_{r_i}$ of size at most $e$ such that $\depth(G_{r_i}-S_{r_i})\le d$. 
Once this pre-condition is met, then balloon phase $i$ simply takes $d$ steps to pebble the entire graph $G_{r_i}$, meeting the post-condition of balloon phase $i$. 

We now formally prove the cumulative memory complexity of the attack in \algref{alg:generic}.
\begin{theorem}
\thmlab{thm:attack}
Let $\G$ be any family of $k$-restricted dynamic graphs with constant $\indeg(\G)$. 
Then 
\[\cc(\G)=\O{\frac{N^2}{\log\log N}+N^{2-1/2\log\log N}\sqrt{k^{1-1/\log\log N}}}.\]
\end{theorem}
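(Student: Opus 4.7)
The plan is to formalize and analyze the light/balloon phase attack sketched in the preamble, which is essentially $\algref{alg:generic}$. Partition the pebbling into alternating phases: during light phase $i$ we advance the ``pebble frontier'' by $g$ nodes, pebbling the interval $[ig+1,(i+1)g]$ in exactly $g$ parallel rounds; during balloon phase $i$ we use the small depth-reducing set to recover pebbles on every revealed node, so that the precondition of light phase $i+1$ is met. I would maintain the following invariant throughout: at the start of each phase the pebbled configuration contains (i) a depth-reducing set $S_t \subseteq V(G_{\le x_t})$ with $|S_t|\le e$ and $\depth(G_{\le x_t}-S_t)\le d$, and (ii) the potential-parent set $\bigcup_{v\in [x_t+1,x_t+g]}\potential(v)$, whose size is at most $gk$ by $k$-restrictedness.

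The invariant is preserved because at every phase boundary the full subgraph $G_{\le x_t}$ is known to the attacker (dynamic edges only vary within the data-independent sets $\potential(v)$, which are revealed as soon as the predecessor $x_t$ is pebbled), so $\valiant(G_{\le x_t},e,d)$ from \lemref{lem:valiant} produces a fresh $S_t$ of the required size with $e = \lceil\eta\,\indeg(\G)\,N/(\log N - \eta)\rceil$ and $d = N/2^\eta$. A balloon phase of depth at most $d$ suffices to repebble $G_{\le x_t}$ from $S_t$, by the standard fact that a DAG of depth $d$ can be parallel-pebbled in $d$ rounds.

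Next I would account cost phase by phase. A light phase runs for $g$ rounds with configuration size at most $e+gk+g$, giving per-phase cost $O(g(e+gk))$ and, summed over $N/g$ light phases, total cost $O(Ne+Ngk)$. A balloon phase of round $i$ runs for $d$ rounds with at most $ig+e\le N$ pebbles on the board, so the total balloon cost is $O(N^2 d/g)$. Combining,
\[
\cc(\G)\;=\;O\!\left(Ne\;+\;Ngk\;+\;\frac{N^2 d}{g}\right).
\]
Now I would optimize the free parameters. Balancing the last two terms by setting $g=\sqrt{Nd/k}$ collapses them to the common value $Ngk = N^2 d/g = N^{3/2}\sqrt{kd}=N^2\sqrt{k}/2^{\eta/2}$. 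I would then choose $\eta = \lceil \log(Nk)/\log\log N\rceil$, so that $2^{\eta/2}=(Nk)^{1/(2\log\log N)}$ and, since $k\le N$, $\eta/\log N = O(1/\log\log N)$. Substituting,
\[
Ne = O\!\left(\frac{N^2}{\log\log N}\right),\qquad \frac{N^2\sqrt{k}}{2^{\eta/2}} = O\!\left(N^{2-1/(2\log\log N)}\,k^{(1-1/\log\log N)/2}\right),
\]
and adding the two bounds yields the stated inequality.

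The main obstacle is not the optimization itself but justifying the light-phase cost in the dynamic setting: we must guarantee that at the moment pebble $ig$ is placed, the potential parents $\bigcup_{v\in[ig+1,(i+1)g]}\potential(v)$ are already on the board, so that each of the next $g$ rounds only needs to read the actually-revealed parent $r(v)\in R_v$. This uses the $k$-restriction crucially, since it lets the attacker prefetch the entire $gk$-sized union before the dynamic edges are even chosen; without the $k$-restriction (i.e., for an unrestricted dMHF) this bulk-prefetch would cost $\Theta(N)$ per step rather than $\Theta(gk)$. A secondary subtlety is the recursive nature of phase costs during the balloon phase, but the bound $ig+e\le N$ is crude enough that the sum $\sum_i (ig+e)d$ contributes only the leading $N^2 d/g$ term plus an $Ned/g$ lower-order term that is absorbed by $Ne$ under the chosen $\eta$.
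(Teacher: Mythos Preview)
Your proposal is correct and follows essentially the same approach as the paper: the same light/balloon phase strategy from \algref{alg:generic}, the same invariant (depth-reducing set of size $e$ plus the $\le gk$ potential parents), the same cost decomposition $O(Ne+Ngk+N^2d/g)$, and the same parameter choices $g=N/\sqrt{k2^\eta}=\sqrt{Nd/k}$ and $\eta\approx\log(Nk)/\log\log N$. The only cosmetic differences are that the paper bounds each balloon step crudely by $N$ pebbles rather than your $ig+e$, and writes $\eta$ with an explicit $-\log\indeg(\G)$ term, neither of which affects the asymptotics.
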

\begin{proof}
We analyze the cost of the pebbling strategy of \algref{alg:generic}. 
Since $G$ is drawn from a distribution of $k$-restricted dynamic graphs, then for any node $x_i$, $r(x_i)$ must be one of at most $k$ labels. 
Thus for any consecutive $g$ nodes, $|\potential([x_i+1,x_i+g])|\le gk$. 
Hence it suffices to keep $gk$ pebbles on the set of potential parents $\potential([x_i+1,x_i+g])$ to pebble the interval $[x_i+1,x_i+g]$, as well as a depth-reducing set of size at most $e$, for each of the $g$ steps during light phase $i$. 
On the other hand, balloon phase $i$ takes $d$ steps, each of which trivially contains at most $N$ pebbles. 

$\A$ proceeds using $\frac{N}{g}$ total rounds of light and balloon phases, by pebbling $g$ consecutive nodes at a time. 
Therefore, the total cost of the attack is $\O{Ngk+Ne+\frac{N}{g}\cdot dN}$, where the first and second terms originate from the cost of the light phases and the third term results from the cost of the balloon phases. 
Since we set $e=\frac{\eta\d N}{\log(N)-\eta}$ and $d=\frac{N}{2^\eta}$ from Valiant's Lemma (\lemref{lem:valiant}) so that the total cost is $\O{\frac{\eta\d N^2}{\log(N)-\eta} + Ngk + \frac{N^3}{2^\eta g}}$. 
By setting $g=\frac{N}{\sqrt{k2^\eta}}$, the total cost is $\O{\frac{N^2\eta\d}{\log N-\eta}+N^2\sqrt{\frac{k}{2^\eta}}}$. 
Finally, by setting $\eta=\frac{\log k + \log N - \log\d}{\log\log N}$, the total cost is $\O{\frac{N^2}{\log\log N}+N^{2-1/2\log\log N}\sqrt{k^{1-1/\log\log N}}}$. 
\end{proof}
Note that if $k=o(N^{1/\log\log N})$, then $\cc(\G)=o(N^2)$.
\begin{corollary}
\corlab{cor:attack}
Let $\G$ be any $k$-restricted dynamic graph with $k=o(N^{1/\log\log N})$ and constant indegree. 
Then $\cc(\G)=o(N^2)$.
\end{corollary}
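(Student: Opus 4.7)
The plan is to derive the corollary directly from \thmref{thm:attack} by plugging in the assumed bound $k = o(N^{1/\log\log N})$ into the two-term upper bound and checking that each summand is $o(N^2)$. Since the theorem's bound already captures the work of the generic attack (\algref{alg:generic}), no new construction is needed and no probabilistic argument is required: this is purely an asymptotic simplification.

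First I would dispatch the first summand. The term $\frac{N^2}{\log\log N}$ is $o(N^2)$ outright, since $\log\log N \to \infty$ as $N\to\infty$. This requires no assumption on $k$.

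Next I would bound the second summand $N^{2-1/(2\log\log N)}\sqrt{k^{1-1/\log\log N}}$ under the hypothesis. Writing the hypothesis as $k \leq N^{1/\log\log N}/f(N)$ for some $f(N) \to \infty$, I would substitute and combine exponents of $N$. The exponent of $N$ contributed by $k^{1-1/\log\log N}$ inside the square root is $\frac{1}{\log\log N}\left(1 - \frac{1}{\log\log N}\right)$, so after the $\sqrt{\cdot}$ it contributes $\frac{1}{2\log\log N} - \frac{1}{2(\log\log N)^2}$ to the exponent. Adding this to the outer $2 - \frac{1}{2\log\log N}$ gives a total exponent of $2 - \frac{1}{2(\log\log N)^2}$, which is strictly less than $2$ for all sufficiently large $N$. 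The $f(N)$ factor only helps, so the second summand is at most $N^{2-1/(2(\log\log N)^2)} = o(N^2)$.

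The only mildly delicate step is the exponent arithmetic above, which has to be done carefully because the negative and positive contributions to the exponent of $N$ nearly cancel; one must verify that a residual negative term of order $1/(\log\log N)^2$ survives. Once that is confirmed, combining the two $o(N^2)$ bounds via the $\O{\cdot}$ in \thmref{thm:attack} immediately yields $\cc(\G) = o(N^2)$, completing the proof.
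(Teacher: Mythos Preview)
Your proposal is correct and follows exactly the approach the paper takes: the paper treats the corollary as an immediate consequence of \thmref{thm:attack}, simply noting before the corollary statement that ``if $k=o(N^{1/\log\log N})$, then $\cc(\G)=o(N^2)$'' without further justification. Your explicit exponent arithmetic (showing the second summand is at most $N^{2-1/(2(\log\log N)^2)}$) merely spells out what the paper leaves implicit.
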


\begin{algorithm}[!htb]
\caption{Generic pebbling strategy against dynamic DAG $G$.}
\alglab{alg:generic}
\begin{algorithmic}[1]
\Require{An integer $i$, an initial pebbling configuration $P_0^i\subseteq[i]$ with $i\in P_0^i$, a partial graph $G_{\le i+1}$, and parameters $d,e,g$.}
\Ensure{A legal pebbling of $G_{\le i+1}$.}
\State{$\invariant\gets\True$}
\If{$i\pmod{g}\equiv 0$ and $\depth(G_{\le i+1}-P_0^i)>d$}
\State{$\invariant\gets\False$}
\ElsIf{$\depth(G_{\le i+1}-P_0^i)>d$ or $\{i\}\cup\potential([i+1,i+g])\not\subseteq P_0^i$}
\State{$\invariant\gets\False$}
\EndIf
\If{$\invariant$}
\Comment{If pre-conditions met.}
\If{$i\pmod{g}\equiv 0$}
\Comment{Balloon phase}
\For{$j=1$ to $j=d$}
\State{$P_j^i=P_{j-1}^i\cup D_j$, where $D_j$ are the nodes at depth $d$ from $P_0^i$.}
\EndFor
\State{$P_{d+1}^i=\valiant(G_{\le i},e,d)\cup\potential([i+1,i+g])$.}
\Comment{See \lemref{lem:valiant}}
\Else
\Comment{Light phase}
\State{$P_1^i=P_0^i\cup\{i+1\}$.}
\EndIf
\Else
\Comment{If pre-conditions not met.}
\For{$j=1$ to $j=i+1$}
\State{$P_j^i=P_{j-1}^i\cup\{j\}$.}
\EndFor
\EndIf
\end{algorithmic}
\end{algorithm}
\section{$k$-Restricted Graphs with high CMC}
\seclab{sec:construct}
In this section, we describe a construction of $k$-restricted graphs with high cumulatively memory complexity that builds into our ultimate ciMHF implementation. 
We first describe the block partition extension gadget, which requires an input graph $G$ and outputs a family of $k$-restricted dynamic graphs. 
However, na\"{i}vely choosing the input graph $G$ does not yield a construction with high CMC. 

Intuitively, the block partition extension gadget takes the last $N$ nodes of $G$ and partitions them into $\frac{N}{k}$ blocks of $k$ nodes each. 
The gadget then creates $N$ more nodes in a path, such that a parent $r(j)$ of node $j$ in this path is drawn uniformly at random from block $i$, where $i=j\pmod{\frac{N}{k}}$. 
The intuition is that by drawing parents uniformly at random from each block in round robin fashion, we encourage an algorithm to keep $\Omega(N)$ nodes on the graph for $\Omega(N)$ steps. 
Of course, the graph could always maintain $o(n)$ pebbles on the graph and repebble when necessary, but we can discourage this strategy by making the repebbling procedure as expensive as possible. 

A first attempt would be to choose a highly depth-robust graph $G$, such as a grates graph, which informally has long paths of length $\Omega(N^{1-\eps})$ for any constant $0<\eps<1$, even when $\Omega(N)$ nodes are removed from $G$. 
Thus if an algorithm does not maintain $\Omega(N)$ pebbles on the graph, the repebbling strategy costs at least $\Omega(N^{2-\eps})$. 
Although this is a good start, this does not quite match the $\Omega(N^2)$ CMC of various dMHFs. 
We defer full discussion of how to increase the CMC to $\Omega(N^2)$ to later in this section.  

Instead, we first define a specific way to obtain a $k$-restricted dynamic graph given a graph $G$ with $N$ nodes and a parameter $k$.
\begin{definition}
[Block Partition Extension]
Given a DAG $G=(V=[\alpha N],E)$ with $\alpha N$ nodes containing a set of $O=[(\alpha-1)N+1,\alpha N]$ output nodes of size $N$ and a parameter $k$, let $O_i=[(\alpha-1)N+1+ik, (\alpha-1)N+(i+1)k]$ for $i\in\left[\frac{N}{k}\right]$ so that $\{O_i\}$ forms a partition of $O$. 
We define the \emph{block partition extension} of $G$, denoted $\partition_k(G)$, as a distribution of graphs $\G_{G,k}$. 
Each graph $G'$ sampled from $\G$ has vertices $V'=[(\alpha+1)N]$ and edges $E'=E\cup F$, where $F$ is defined as the edges $(i-1,i)$ and $(r(i),i)$ for each $i\in[\alpha N+1,(\alpha+1)N]$, where $r(i)$ is drawn uniformly at random from $O_{i\mod{\frac{N}{k}}}$. 
\end{definition}
An example of the block partition extension is given in \figref{fig:partition}.
\begin{figure}[!htb]
\centering
\begin{tikzpicture}
\filldraw[thick, top color=white,bottom color=green!50!] (0,0) rectangle+(0.5,1);
\filldraw[thick, top color=white,bottom color=green!50!] (0.5,0) rectangle+(0.5,1);
\draw (1,0) rectangle +(1,1);
\node at (0.25,0.5){$1$};
\node at (0.75,0.5){$2$};
\node at (1.5,0.5){$\ldots$};
\filldraw[thick, top color=white,bottom color=green!50!] (2,0) rectangle+(0.5,1);
\filldraw[thick, top color=white,bottom color=red!50!] (2.5,0) rectangle+(0.5,1);
\filldraw[thick, top color=white,bottom color=red!50!] (3,0) rectangle+(0.5,1);
\node at (2.25,0.5){$k$};
\draw (3.5,0) rectangle +(1,1);
\node at (4,0.5){$\ldots$};
\filldraw[thick, top color=white,bottom color=red!50!] (4.5,0) rectangle+(0.5,1);
\node at (5.5,0.5){$\ldots$};
\filldraw[thick, top color=white,bottom color=blue!50!] (6,0) rectangle+(0.5,1);
\filldraw[thick, top color=white,bottom color=blue!50!] (6.5,0) rectangle+(0.5,1);
\draw (7,0) rectangle +(1,1);
\node at (7.5,0.5){$\ldots$};
\filldraw[thick, top color=white,bottom color=blue!50!] (8,0) rectangle+(0.5,1);

\draw [decorate,decoration={brace}](2.45,-0.2) -- (0.05,-0.2);
\node at (1.25,-0.6){Block $1$}; 
\draw [decorate,decoration={brace}](4.95,-0.2) -- (2.55,-0.2);
\node at (3.75,-0.6){Block $2$};
\node at (5.5,-0.6){$\ldots$};
\draw [decorate,decoration={brace}](8.45,-0.2) -- (5.95,-0.2);
\node at (7.25,-0.6){Block $\frac{N}{k}$};

\draw (1,-2.5) circle [radius=0.3];
\draw (1.4,-2.5)[->] -- +(0.4,0);
\draw (2.2,-2.5) circle [radius=0.3];
\draw (2.6,-2.5)[->] -- +(0.4,0);
\draw (3.4,-2.5) circle [radius=0.3];
\draw (3.8,-2.5)[->] -- +(0.4,0);
\node at (4.6,-2.5){$\ldots$};
\draw (5,-2.5)[->] -- +(0.4,0);
\draw (5.8,-2.5) circle [radius=0.3];
\draw (6.2,-2.5)[->] -- +(0.4,0);
\draw (7,-2.5) circle [radius=0.3];

\draw (0.75,-1)[->] -- (2.2,-2.1);
\draw (3.75,-1)[->] -- (3.4,-2.1);
\draw (7.25,-1)[->] -- (5.8,-2.1);
\draw (1.75,-1)[->] -- (1.75,-1.5) -- (7,-1.5) -- (7, -2.1);
\end{tikzpicture}
\caption{Parent $r(i)$ is drawn uniformly at random from the nodes partitioned to each block.}
\figlab{fig:partition}
\end{figure}
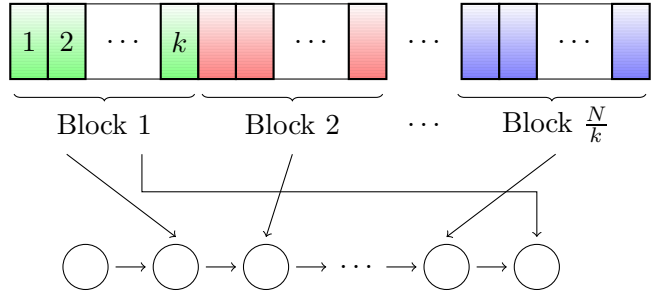

Our ultimate construction also requires the use of superconcentrator graphs, defined as follows:
\begin{definition}
A graph $G$ with $\O{N}$ vertices is a \emph{superconcentrator} if there exists an input set $I$ and an output set $O$ with $|I|=|O|=N$ such that for all $S_1\subseteq I,S_2\subseteq O$ with $|S_1|=|S_2|=k$, there are $k$ node disjoint paths from $S_1$ to $S_2$.
\end{definition}
It is known that there exists superconcentrators with $|I|=|O|=N$, constant indegree and $\O{N}$ total nodes~\cite{Pippenger77,LengauerT82}. 
We now show that a set $Y$, which contains more nodes than a set $S$ of removed nodes, has at least $N-|S|$ ancestors in $G-S$. 
\begin{lemma}
\lemlab{lem:sc:repebble}
Given a superconcentrator $G$ with $N$ input nodes $I$ and $N$ output nodes $O$, let $S$ and $Y\subseteq O$ be sets of nodes with $|S|<|Y|$. 
Then $|I\cap\ancestors_{G-S}(Y)|\ge N-|S|$. 
\end{lemma}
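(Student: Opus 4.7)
The plan is to argue by contradiction using the defining property of a superconcentrator, namely that for any equal-sized source and sink subsets of size $k$, there exist $k$ node-disjoint paths between them. The intuition is that if too many input nodes failed to reach $Y$ after the removal of $S$, we could pick a batch of such ``stranded'' inputs larger than $S$, and match them to an equally sized subset of $Y$ via node-disjoint paths; by pigeonhole one such path would avoid $S$ entirely, contradicting the assumption that those inputs were stranded.

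More concretely, suppose for contradiction that $|I \cap \ancestors_{G-S}(Y)| < N - |S|$. Then the complementary set $I' := I \setminus \ancestors_{G-S}(Y)$ satisfies $|I'| > |S|$. In particular we can choose $S_1 \subseteq I'$ with $|S_1| = |S|+1$, and since $|Y| > |S|$ by hypothesis, we can also choose $S_2 \subseteq Y$ with $|S_2| = |S|+1$. Applying the superconcentrator property to $S_1$ and $S_2$ (both of size $|S|+1$), we obtain $|S|+1$ pairwise node-disjoint directed paths $P_1,\ldots,P_{|S|+1}$ from $S_1$ to $S_2$ in $G$.

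Since these paths are node-disjoint and $|S| < |S|+1$, the pigeonhole principle forces at least one path, say $P_j$, to contain no vertex of $S$. Then $P_j$ is a directed path in $G - S$ from some $u \in S_1 \subseteq I'$ to some $v \in S_2 \subseteq Y$, which means $u \in \ancestors_{G-S}(Y)$. This contradicts the choice $u \in I' = I \setminus \ancestors_{G-S}(Y)$, completing the proof.

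I do not expect any real obstacle here: the main content is just recognizing that the superconcentrator guarantee, combined with a counting argument on $S$ versus the node-disjoint paths, immediately yields the bound. The only thing to be careful about is the edge case where $|S| \geq N$, in which case the conclusion $|I \cap \ancestors_{G-S}(Y)| \geq N - |S|$ is vacuous (as the right-hand side is non-positive), so we may assume $|S| < N$ and hence $|S|+1 \leq N = |I|$, ensuring that the choice of $S_1 \subseteq I'$ of size $|S|+1$ is feasible.
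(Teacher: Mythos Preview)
Your proof is correct and uses essentially the same idea as the paper---apply the superconcentrator property to get more node-disjoint paths than $|S|$, then use pigeonhole to find one that survives in $G-S$. The paper argues directly (picking a subset $X\subseteq I$ of size $|Y|$ and noting that at most $|S|$ of the $|Y|$ disjoint paths can be killed), whereas you package it as a contradiction by focusing on the complement $I' = I\setminus\ancestors_{G-S}(Y)$; your version is arguably cleaner and more explicit about why the bound is exactly $N-|S|$.
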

\begin{proof}
Let $X\subseteq I$ be the last $|Y|$ nodes of $I$. 
Since $G$ is a superconcentrator, then $G$ contains at least $|Y|$ node disjoint paths between $X$ and $Y$. 
Since $|S|<|Y|$, then one of these paths from $X$ to $Y$ that does not intersect $S$. 
Thus, $X$ contains some ancestor of $Y$ in $G-S$ and in fact by considering the paths associated with decreasing order of nodes in $X$, it follows that $|I\cap\ancestors_{G-S}(Y)|\ge N-|S|$. 
\end{proof}
We require the use of grates graphs $\{\grates_{N,\eps}\}_{N=1}^\infty$ \cite{Schnitger83}. 
For each constant $\eps > 0$ and each $N \geq 1$ the graph $\grates_{N,\eps} = (V_N,E_{N,\eps})$ has $\O{N}$ nodes and constant indegree $\indeg(\grates_{N,\eps}) = \O{1}$. 
Moreover, the graph $\grates_{N,\eps}$ contains source nodes $I_N \subset V_N$ and $N$ sinks $O_N \subset V_N$. 
Given a set $S \subseteq V_N$ of deleted nodes we say that an output node $y \in O_N$ is $c$-good with respect to $S$ if $\left|I_N \cap \ancestors_{\grates_{N,\eps}-S}(y) \right| \geq cN$ i.e., for at least $cN$ input nodes $x \in I_N$ the graph $\grates_{N,\eps}-S$ still contains a path from $x$ to $y$. 
The grates graph contains several properties summarized below.

\begin{theorem}
\cite{Schnitger83}
\thmlab{thm:grates}
For each $\eps >0$ there exist constants $\gamma,c > 0$ such that for all $N \geq 1$ the graph $\grates_{N,\eps}$ is $(\gamma N,c N^{1-\eps})$-depth robust. 
Furthermore, for each set $S \subseteq V_N$ of size $|S| \leq \gamma N$ at least $cN$ output nodes are still $c$-good with respect to $S$. Formally,
\[ \left| \left\{x \in O~:~ \left|I_N \cap \ancestors_{G-S}(x) \right| \geq cN \right\} \right| \geq  cN \ . \]
\end{theorem}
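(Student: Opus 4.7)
The plan is to realize $\grates_{N,\eps}$ as an explicit layered DAG whose depth-robustness and routing properties are enforced by a gadget-based construction. Partition the $\O{N}$ vertices into $L = \Theta(N^{1-\eps})$ consecutive layers of width $w = \Theta(N^{\eps})$, declare $I_N$ to be the first layer and $O_N$ the last, and wire each pair of consecutive layers through a constant-indegree, constant-depth ``mini-superconcentrator'' of width $w$ (such gadgets exist with $\O{w}$ internal nodes by \cite{Pippenger77, LengauerT82}). This immediately gives $\O{N}$ vertices in total and constant $\indeg(\grates_{N,\eps})$, and reduces global robustness to a counting argument over layers combined with the routing power of each per-layer gadget.

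First I would establish $(\gamma N, cN^{1-\eps})$-depth-robustness. Fix any $S \subseteq V_N$ with $|S| \le \gamma N$. Call a layer \emph{damaged} if at least $w/2$ of its nodes lie in $S$ and \emph{clean} otherwise; a pigeonhole bound gives at most $2\gamma L$ damaged layers. Across any pair of consecutive clean layers the mini-superconcentrator property guarantees that a large surviving subset of one layer routes to a large surviving subset of the next along node-disjoint paths, so a directed path extends by one step per clean pair. Since a superconcentrator accommodates a linear number of node-disjoint paths, a small number of damaged layers can also be bypassed by spending extra routing capacity. Choosing $\gamma$ small enough forces a surviving directed path of length $\Omega(L) = \Omega(N^{1-\eps})$ in $\grates_{N,\eps} - S$, which is the desired depth-robustness.

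For the second part I would track, for each layer index $i$, the set $A_i$ of nodes in layer $i$ whose surviving ancestors in $I_N$ number at least $cN$; in particular $A_L \cap O_N$ is exactly the set of $c$-good output nodes. Starting from $A_1 = I_N \setminus S$ with $|A_1| \ge w - |S|$, the routing property of each clean mini-superconcentrator propagates this mass forward with only an additive loss of $O(|S \cap \text{layer } i{+}1|)$, by a local application of the same reasoning as in \lemref{lem:sc:repebble}. Summing the per-layer losses across all $L$ layers bounds the total loss by $O(|S|) = O(\gamma N)$, so calibrating $\gamma$ small enough preserves $|A_L \cap O_N| \ge cN$, which is precisely the second conclusion.

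The main obstacle is calibrating the three constants $\gamma$, $c$, and the mini-superconcentrator's routing parameter simultaneously, so that (i) the damaged-layer bound $2\gamma L$ is small enough to keep a surviving path of length $\Omega(N^{1-\eps})$, (ii) the per-gadget contraction ratio is weak enough that after $L$ compositions the surviving mass $|A_L|$ is still $\ge cN$, and (iii) the \emph{same} constant $c$ lower-bounds both the number of good output nodes and the size of each good output's surviving-ancestor set. The subtle point is ruling out adversarial placements of $S$ in which small additive losses accumulate coherently across all $L$ layers; handling this requires the linear routing capacity afforded by true superconcentrators (rather than weaker expanders) and a careful joint induction on the two invariants above.
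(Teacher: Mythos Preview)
The paper does not supply its own proof of this theorem: it is quoted from \cite{Schnitger83} and used as a black box throughout \secref{sec:construct}. So there is no ``paper's proof'' to compare against; the relevant question is whether your sketch actually establishes the stated properties.

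It does not, and the gap is in the depth-robustness claim. With $L = \Theta(N^{1-\eps})$ layers of width $w = \Theta(N^{\eps})$, the adversary can spend its entire budget $\gamma N$ wiping out roughly $\gamma N / w = \Theta(\gamma N^{1-\eps}) = \Theta(\gamma L)$ layers \emph{completely}, placing them at evenly spaced positions. In a strictly layered DAG every directed path visits the layers in order, so after these deletions the graph breaks into $\Theta(\gamma L)$ pieces, each spanning only $\Theta(1/\gamma)$ consecutive layers. The longest surviving path then has length $O(1/\gamma)$, a constant independent of $N$, rather than $\Omega(N^{1-\eps})$. Your sentence ``a small number of damaged layers can also be bypassed by spending extra routing capacity'' is exactly where the argument breaks: in your construction there is no mechanism for bypassing a fully deleted layer, since the only edges leaving layer $i$ go into the gadget between layers $i$ and $i{+}1$.

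This is why Schnitger's actual $\grates$ construction is not a flat stack of width-$N^{\eps}$ superconcentrators; obtaining $(\Omega(N),\Omega(N^{1-\eps}))$-depth-robustness with constant indegree genuinely requires long-range structure (recursion, multi-scale edges, or similar) so that no set of $\gamma N$ deletions can sever all long paths simultaneously. The same attack also undermines your second argument: once a single layer is fully deleted, $A_i = \emptyset$ for all subsequent $i$, so no output node retains any surviving ancestors in $I_N$, let alone $cN$ of them. The per-layer additive-loss bookkeeping you propose only works when every layer retains a linear fraction of its nodes, which the adversary is under no obligation to respect.
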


\noindent
We require the use of graph overlays, defined as follows:
\begin{definition}[Graph overlays]
Given a DAG $H=(V=[N],E)$ with sources $I=\{1,\ldots,n_1\}$ and sinks $O=\{N-n_2+1,\ldots,N\}$, a DAG $G_1=(V_1=[n_1],E_1)$, and a DAG $G_2=(V_2=[n_2],E_2)$, we define:
\begin{enumerate}
\item
the \emph{graph overlay} $G'=\overlay(G_1,H,G_2)$ by $G'=([N],E')$, where $(i,j)\in E'$ if and only if $(i,j)\in E$ or $(i,j)\in E_1$ or $(i+N-n_2,j+N-n_2)\in E_2$ 
\item
the \emph{superconcentrator overlay} of an $N$ node DAG $G$ by $\superconc(G)=\overlay(G,\SC_N,L_N)$, where $\SC_N$ is a superconcentrator with $N$ input (sources) and output (sinks) nodes and $L_N$ is the line graph of $N$ nodes 
\item
the \emph{grates overlay} of an $N$ node DAG $G$ by $\grates_{\eps}(G)=\overlay(G,\grates_{N,\eps},L_N)$.
\end{enumerate}
\end{definition}
An example of a graph overlay is displayed in \figref{fig:overlay}.

\begin{figure}
\centering
\begin{tikzpicture}
\node at (-1,0){$G_2$};
\node at (-1,1.75){$G$};
\node at (-1,3.5){$G_1$};

\draw (0,0) circle [radius=0.3];
\draw (0.4,0)[->] -- +(0.7,0);
\draw (1.5,0) circle [radius=0.3];
\draw (1.9,0)[->] -- +(0.7,0);
\draw (3,0) circle [radius=0.3];

\draw (0,1) circle [radius=0.3];
\draw (1.5,1) circle [radius=0.3];
\draw (3,1) circle [radius=0.3];
\draw (0,2.5) circle [radius=0.3];
\draw (1.5,2.5) circle [radius=0.3];
\draw (3,2.5) circle [radius=0.3];
\draw (0,2.1)[->] -- (0,1.4);
\draw (0.3,2.1)[->] -- (1.2,1.4);
\draw (1.5,2.1)[->] -- (1.5,1.4);
\draw (2.7,2.1)[->] -- (1.8,1.4);
\draw (3,2.1)[->] -- (3,1.4);

\draw (0,3.5) circle [radius=0.3];
\draw (0.4,3.5)[->] -- +(0.7,0);
\draw (1.5,3.5) circle [radius=0.3];
\draw (1.9,3.5)[->] -- +(0.7,0);
\draw (3,3.5) circle [radius=0.3];
\draw [->] (0,3.9) to [out=20,in=160] (3,3.9);

\draw (4,1.75)[->] -- +(1,0);

\draw (6,1) circle [radius=0.3];
\draw (7.5,1) circle [radius=0.3];
\draw (9,1) circle [radius=0.3];
\draw (6,2.5) circle [radius=0.3];
\draw (7.5,2.5) circle [radius=0.3];
\draw (9,2.5) circle [radius=0.3];
\draw (6,2.1)[->] -- (6,1.4);
\draw (6.3,2.1)[->] -- (7.2,1.4);
\draw (7.5,2.1)[->] -- (7.5,1.4);
\draw (8.7,2.1)[->] -- (7.8,1.4);
\draw (9,2.1)[->] -- (9,1.4);

\draw (6.4,1)[->] -- +(0.7,0);
\draw (7.9,1)[->] -- +(0.7,0);
\draw (6.4,2.5)[->] -- +(0.7,0);
\draw (7.9,2.5)[->] -- +(0.7,0);
\draw [->] (6,2.9) to [out=20,in=160] (9,2.9);

\end{tikzpicture}
\caption{An example of a graph overlay $\overlay(G_1,G,G_2)$.}
\figlab{fig:overlay}
\end{figure}
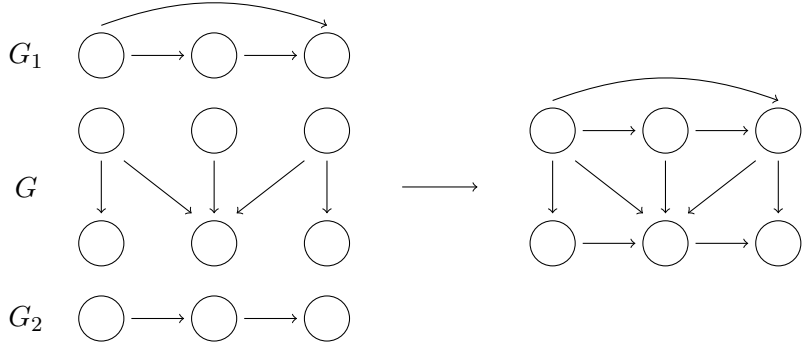

We describe a preliminary attempt at a ciMHF construction in \figref{fig:graph:random}. 
At a high level, the construction consists of four components. 
The first component is a grates graph $G_1$ with $N$ nodes. 
The second component is a superconcentrator overlay with $\O{N}$ nodes, including $N$ input nodes and $N$ output nodes, so that $G_2=\superconc(G_1)$. 
The third component consists of a grates overlay with $\O{N}$ nodes including $N$ output nodes, so that $G_3=\grates_{\eps}(G_2)$. 
The $N$ output nodes of $G_3$ are partitioned into $\frac{N}{k}$ blocks, each with $k$ nodes, in preparation for a block partition extension in the final component. 
Namely, the fourth component consists of a $k$-restricted graph with $N$ nodes, so that $G_4=\partition_k(G_3)$.

\begin{figure}[!htb]
\begin{mdframed}
Sampling Algorithm, for $k=\Omega(N^\eps)$:
\begin{enumerate}
\item
$G_1=\grates_{N,\eps}$ 
\item
$G_2=\superconc(G_1)$
\item
$G_3=\grates_{\eps}(G_2)$
\item
$G_4\sim\partition_k(G_3)$
\end{enumerate}
\end{mdframed}
\vspace{-0.5cm}
\caption{First attempt at ciMHF. Each parent $r(i)$ is randomly chosen from the labels in specific block corresponding to $i$.}
\figlab{fig:graph:random}
\end{figure}

The intuition for the $\Omega(N^2)$ cumulative pebbling complexity is as follows. 
Suppose there exists a time $t_{\bad}$ with a ``small'' number of pebbles on the graph. 
Then with high probability, walking a pebble $s=\frac{N}{4k}$ steps on the final layer of the graph will require some number of output nodes of the grates graph to be repebbled. 
Again with high probability, repebbling one of these output nodes requires a large number of input nodes of the grates graph to be repebbled. 
These input nodes are the output nodes of the superconcentrator at the second layer. 
The superconcentrator property then implies that $\Omega(N)$ nodes of the grates graph on the first layer will need to be repebbled. 
For a grates graph that is $(\Omega(N),\Omega(N^{1-\eps}))$-depth robust, this cost is at least $\Omega(N^{2-\eps})$ every $s$ steps. 
Thus, the total cost is at least $\min\left(\Omega(N^2), k\Omega(N^{2-\eps})\right)$, which is just $\Omega(N^2)$ for $k=\Omega(N^\eps)$.

We now show that our construction in \figref{fig:graph:random} has cumulative memory complexity $\Omega(N^2)$. 
\begin{theorem}
\thmlab{thm:cc:graph:random}
Let $G$ be drawn from the distribution of $k$-restricted graphs in \figref{fig:graph:random}, for $k=\Omega(N^\eps)$. 
There exist constants $c_1>0$ and $c_2\in(0,1)$ such that for any dynamic pebbling strategy $S$,
\[\PPPr{\G}{\cc(S,G)>c_1N^2}\ge 1-c_2^{N/k}.\]
\end{theorem}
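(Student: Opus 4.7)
My plan is to establish a dichotomy: any dynamic pebbling strategy either maintains $\Omega(N)$ pebbles on the graph for a constant fraction of the $N$ time steps spent pebbling the final block-partition path (yielding $\Omega(N^2)$ cost directly), or else must pay $\Omega(N^{2-\eps})$ of rework in each of $\Omega(k)$ \emph{sparse rounds}, totalling $\Omega(k N^{2-\eps}) = \Omega(N^2)$ when $k = \Omega(N^\eps)$. The $c_2^{N/k}$ failure probability will come from independence of the parent choices across the $N/k$ blocks within each sparse round. Concretely, fix a small constant $\beta > 0$ below both the grates depth-robustness threshold $\gamma$ from \thmref{thm:grates} and the superconcentrator threshold implicit in \lemref{lem:sc:repebble}, and partition the pebbling of the $N$ final path nodes into $k$ rounds of $N/k$ steps each, so that each round contains exactly one independent uniform parent choice from each of the $N/k$ blocks. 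Call a time step $t$ \emph{sparse} if $|P_t| < \beta N$ and \emph{dense} otherwise, and call a round \emph{sparse} if it contains at least one sparse step. If at least $N/2$ of the $N$ steps are dense, then $\cc(S, G) \geq (\beta/2) N^2$ immediately; otherwise by pigeonhole at least $k/2$ rounds are sparse.

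For the second branch, I will show that each sparse round incurs $\Omega(N^{2-\eps})$ cost with probability at least $1 - (1-c)^{N/k}$ over its internal parent choices, and take a union bound over the (at most $k$) sparse rounds. Let $P^{\ast}$ be the pebble set at the first sparse time of the round, so $|P^{\ast}| < \beta N < \gamma N$. Applying \thmref{thm:grates} to the grates subgraph inside $G_3$ with $P^{\ast}$ restricted to that subgraph as the removed set, at least $cN$ of its $N$ outputs are $c$-good with respect to $P^{\ast}$. Let $p_b = |B_b \cap \text{good}|/k$ for block $b$; then $\sum_b p_b = c$. Since the $N/k$ parent choices revealed during the round are independent uniforms on distinct blocks (and independent of $P^{\ast}$ for those revealed after the first sparse time), the probability that no choice falls on a $c$-good output is $\prod_b (1 - p_b) \leq (1-c)^{N/k}$ by the AM-GM inequality. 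The union bound over the $\leq k$ sparse rounds then gives overall failure probability $\leq k (1-c)^{N/k} \leq c_2^{N/k}$ after absorbing the $k$ factor into a slightly weaker $c_2 \in (0,1)$.

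Third, conditional on some $c$-good output $y$ being chosen as a parent during the sparse round, I will argue that the three-layer cascade forces $\Omega(N^{2-\eps})$ cost. Pebbling $y$ from $P^{\ast}$ requires at least $cN$ input ancestors of $y$ in the $G_3$-grates to be pebbled, and these coincide with outputs of the $G_2$ superconcentrator. By \lemref{lem:sc:repebble} (applied with $|P^{\ast}| < cN$), having pebbles on $cN$ superconcentrator outputs forces pebbles on at least $(1-\beta) N$ superconcentrator inputs, which coincide with outputs of $G_1$. Finally, re-pebbling $\Omega(N)$ outputs of $G_1$ from the at most $\beta N < \gamma N$ initial pebbles on $G_1$ in $P^{\ast}$ essentially requires pebbling the residual depth-robust graph $G_1 - (P^{\ast} \cap V(G_1))$; by the Alwen-Blocki-Pietrzak cumulative lower bound $\cc \geq e \cdot d$, this costs $\Omega(\gamma N \cdot c N^{1-\eps}) = \Omega(N^{2-\eps})$. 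Summing across the $k/2$ sparse rounds yields $\Omega(k N^{2-\eps}) = \Omega(N^2)$ for $k = \Omega(N^\eps)$.

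The hardest step will be formalizing the final cascade. The Alwen-Blocki-Pietrzak bound is naturally stated for pebbling a depth-robust graph from the empty configuration, whereas here the pebbling begins from the non-empty $P^{\ast}$ restricted to $G_1$ and only needs to pebble $\Omega(N)$ of $G_1$'s outputs rather than all of them. I plan to resolve this by noting that since $\beta < \gamma$ the residual $G_1 - (P^{\ast} \cap V(G_1))$ retains $(\Omega(N), \Omega(N^{1-\eps}))$-depth robustness, and by adapting the Alwen-Blocki-Pietrzak argument to the cumulative cost contributed within the sparse round. An additional bookkeeping concern is ensuring the cost contributions across distinct sparse rounds are not double-counted, which I will handle by charging each sparse round only for the pebbles residing during its own disjoint time window.
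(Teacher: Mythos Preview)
Your proposal is correct and follows essentially the same route as the paper: the same dense/sparse dichotomy over $k$ disjoint windows of length $N/k$, the same AM--GM bound $(1-c)^{N/k}$ on the probability that none of the fresh parent choices hits a $c$-good output of the outer grates layer, and the same three-stage cascade (outer grates $\to$ superconcentrator via \lemref{lem:sc:repebble} $\to$ inner grates via the $e\cdot d$ depth-robustness lower bound) to force $\Omega(N^{2-\eps})$ rework per sparse window. One small slip: with $p_b = |B_b\cap\text{good}|/k$ you get $\sum_b p_b = cN/k$, not $c$, though the AM--GM conclusion $(1-c)^{N/k}$ is unaffected; and when the first sparse step falls late in a window you will need (as the paper implicitly does) to look at the next $N/k$ parent choices rather than only those inside the window, absorbing the resulting factor-two overlap into the constants.
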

\begin{proof}
Let $\alpha N$ be the total number of nodes in $G_3$ so that the total number of nodes in $G$ is $(\alpha+1)N$. 
Let $x,y\in(0,1)$ be constants such that the grates graph $G_1$ is $(xN,yN^{1-\eps})$-depth robust. 
By \thmref{thm:grates}, there exist constants $0<c<\frac{x}{2}$ and $0<\gamma<c$ such that for any set $S$ with $|S|\le\gamma N$, at least $cN$ nodes in the output nodes of $G_3$ are $c$-good with respect to $S$. 
For each node $i$, let $t_i$ be the first time that node $i$ is pebbled. 
Suppose there exists a time $t_{\bad}$ with $t_i\le t_{\bad}<t_{i+1}$ such that there are $|P_{t_\bad}|<\frac{\gamma N}{4}$ pebbles on the graph. 

For a node $j$ in the output set $[(\alpha-1)N,\alpha N]$ of $G_3$, we call an index $j$ a \emph{costly index} if $j$ is $c$-good with respect to $P_{t_\bad}$ and let $\costly$ be the set of costly indices. 
Note that if a node $i\in\costly$, then by definition $i\notin P_{t_\bad}$. 
By \thmref{thm:grates} and the observation that $|P_{t_\bad}|<\frac{\gamma N}{4}$, there are at least $cN$ nodes in the output set of $G_3$ are $c$-good with respect to $P_{t_\bad}$, i.e., $|\costly|>cN$. 
Then for $s:=\frac{N}{k}$, we call $j\in[i, i+s]$ a \emph{missed costly index} if $r(j)\notin P_{t_\bad}$ and let $r(j)\in\costly$. 

For each $j\in\left[\frac{N}{k}\right]$, let $c_j$ be the number of costly indices in block $j$ of the output set of $G_3$, i.e., $c_j:=\left|\costly\cap[(\alpha-1)N+(j-1)k+1,(\alpha-1)N+jk]\right|$. 
Since the parents of $[i,i+s]$ are exactly one random node from each of the $\frac{N}{k}$ blocks, then the probability $p$ that no parent of $[i,i+s]$ is a missed costly index is 
\[p:=\prod_{j=1}^{N/k}\left(1-\frac{c_j}{k}\right)\le\left(\frac{k}{N}\sum_{j=1}^{N/k}\left(1-\frac{c_j}{k}\right)\right)^{N/k},\]
where the inequality holds by the Arithmetic Mean-Geometric Mean Inequality. 
Since $\sum c_j=cN$, then 
\begin{align*}
p&\le\left(1-\frac{k}{N}\sum_{j=1}^{N/k}\frac{c_j}{k}\right)^{N/k}\le(1-c)^{N/k}.
\end{align*}
Thus with high probability, there will be some missed costly index. 
By \lemref{lem:sc:repebble} and the definition of $c$-good, any missed costly index requires $cN$ nodes in the input set of $G_3$ to be repebbled.
Since the input set of $G_3$ is connected by a superconcentrator to the output set of $G_1$, the $cN$ nodes in the input set of $G_3$ that need to be repebbled have at least $N-cN$ ancestors in the output set of $G_1$. 
Thus, at least $N-cN-|P_{t_\bad}|$ nodes in $G_1$ must be repebbled. 

Because $G_1$ is $(xN,yN^{1-\eps})$-depth robust, then $G_1-S$ is $(xN-|S|,yN^{1-\eps})$-depth robust for any set $S$. 
Moreover, the cost to pebble $G_1-S$ is at least $(xN-|S|)(yN^{1-\eps})$. 
In particular, if $G_1-S$ is the set of nodes in $G_1$ must be repebbled, then it costs at least $(xN-cN-|P_{t_\bad}|)(yN^{1-\eps})$ to repebble $G_1-S$. 
Since $c<\frac{x}{2}$ and $|P_{t_\bad}|<\frac{\gamma N}{4}<\frac{cN}{4}<\frac{xN}{8}$, then the cost is at least $\frac{3xy}{8} N^{2-\eps}$. 

Hence to pebble an interval $[i,i+s]$ with $s=\frac{N}{k}$, either $\frac{\gamma N}{4}$ pebbles are kept on the graph for all $s=\frac{N}{k}$ steps or if we at any point in time $j \in [i,i+s]$  we have $|P_j| \leq \gamma N/4$ then (whp) an the pebbling algorithm incurs cost $\frac{3xy}{8} N^{2-\eps}$ to repebble the graph during the next $s$ steps $[j,j+s]$. By partitioning the last $N$ nodes of the graph $G$ into $k$ disjoint intervals of length $\frac{N}{k}$, it follows that the total cost is at least $\min\left(\frac{\gamma N^2}{4}, \frac{3}{16}xykN^{2-\eps}\right)$. 
Thus for $k=\Omega(N^\eps)$, the total cost is $\Omega(N^2)$ with high probability. 
\end{proof}
\begin{corollary}
Let $G$ be drawn from the distribution of $k$-restricted graphs in \figref{fig:graph:random}, for $k=\Omega(N^\eps)$. 
Then $\cc(G)=\Omega(N^2)$.
\end{corollary}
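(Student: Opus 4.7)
The plan is to derive this expected-cost bound as an immediate consequence of the high-probability statement in \thmref{thm:cc:graph:random}. That theorem guarantees constants $c_1 > 0$ and $c_2 \in (0,1)$ such that, for every dynamic pebbling strategy $S$,
\[\PPPr{G \sim \G}{\cc(S,G) > c_1 N^2} \;\geq\; 1 - c_2^{N/k}.\]
First I would invoke the definition $\cc(S,\G) = \EEx{G \sim \G}{\cc(S,G)}$ and apply the standard ``reverse Markov'' step that discards all probability mass on the low-cost event: since $\cc(S,G) \geq 0$ always, we have
\[\cc(S,\G) \;\geq\; c_1 N^2 \cdot \PPPr{G \sim \G}{\cc(S,G) > c_1 N^2} \;\geq\; c_1 N^2 \bigl(1 - c_2^{N/k}\bigr).\]

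Second, I would observe that $k \leq N$, since the block partition extension gadget divides the $N$ output nodes of $G_3$ into blocks of size $k$, so $N/k \geq 1$ and therefore $c_2^{N/k} \leq c_2 < 1$. Consequently $1 - c_2^{N/k} \geq 1 - c_2$ is a positive constant independent of $N$ and $S$, and we obtain $\cc(S,\G) \geq c_1 (1-c_2) N^2 = \Omega(N^2)$ with the implicit constant not depending on $S$.

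Third, since this lower bound is uniform over all dynamic pebbling strategies, taking the infimum yields $\cc(\G) = \min_S \cc(S,\G) = \Omega(N^2)$, as required. There is no real obstacle in this argument; the heavy lifting (the depth-robustness and superconcentrator arguments showing that a ``costly'' missed parent forces $\Omega(N^{2-\eps})$ repebbling cost on the grates layer) has already been done inside \thmref{thm:cc:graph:random}. The only thing to check carefully is that the high-probability bound survives the expectation, which follows from the trivial non-negativity of $\cc(S,G)$ and the fact that $N/k$ is at least a positive constant whenever $k = \Omega(N^{\eps})$ with $\eps \leq 1$.
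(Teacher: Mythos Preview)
Your proposal is correct and is exactly the intended argument: the paper states this corollary without proof, treating it as an immediate consequence of \thmref{thm:cc:graph:random}, and your derivation via the nonnegativity of $\cc(S,G)$ and the uniform-in-$S$ lower bound $c_1(1-c_2)N^2$ is precisely how one fills in that step. There is nothing to add.
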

\section{$k$-Restricted Graphs: Amenable to Shuffling}
\seclab{sec:shuffle}
In this section, we introduce a useful property for certain dynamic graphs: amenable to shuffling.  
In \secref{sec:implementation}, we will describe computationally data-independent evaluation algorithms for evaluating memory hard function based on dynamic graphs that are amenable to shuffling. 

\subsection{Characterization of Dynamic Graphs Amenable to Shuffling}
We first describe the properties of dynamic graphs that are amenable to shuffling. 
Recall that for a node $i$, we define $\potential(i)$ to be set $Y_i$ of minimal size such that $\PPr{r(i)\in Y_i}=1$, where $r(i) < i-1$ is randomly chosen so that the directed edge $(r(i),i)$ is in the dynamic graph.  
\begin{definition}[Amenable to Shuffling]
\deflab{def:shuffling}
Let $G$ be a DAG with $\alpha N$ nodes for some constant $\alpha>1$ and let $L$ be the last $N$ nodes of $G$. 
Suppose that $L$ can be partitioned into $\frac{N}{k}$ groups $G_1,\ldots,G_{\frac{N}{k}}$ such that
\begin{enumerate}
\item
Uniform Size of Groups: $|G_i|=k$ for all $i\in\left[\frac{N}{k}\right]$.
\item
Large Number of Potential Parents: For each $v\in L$, $|\potential(u)|=k$. 
\item
Potential Parents not in $L$: For each $v\in L$, $\potential(u)\subseteq [(\alpha-1)N]$
\item
Same Potential Parents for Each Group: For all $i\in\left[\frac{N}{k}\right]$ and $u,v\in G_i$, $\potential(u)=\potential(v)$. 
\item
Different Potential Parents for Different Groups: For all $i,j\in\left[\frac{N}{k}\right]$ with $i\neq j$, let $u\in G_i$ and $v\in G_j$. 
Then $\potential(u)\cap\potential(v)=\emptyset$. 
\item
No Collision for Parents: For each $i\in\left[\frac{N}{k}\right]$, define the event $\unique_i$ to be the event that $r(u)\neq r(v)$ for all $u,v \in G_i$. 
Then $\PPr{\unique_i}=1$ for all $i\in\left[\frac{N}{k}\right]$.
\item
Data-Independency: The subgraph induced by the first $(\alpha-1)N$ nodes is a static graph.
\end{enumerate}
Then we call $G$ \emph{amenable to shuffling}.
\end{definition}
We shall show in \thmref{thm:shuffling:cimhf} in \secref{sec:implementation} that dynamic graphs that are amenable to shuffling can be used for memory hard functions with computationally data-independent evaluation algorithms. 
We now describe a version of \figref{fig:graph:random} that is amenable to shuffling. 

\subsection{Version of Construction Amenable to Shuffling}
\seclab{sec:cr}
To ensure that there does not exist $i\neq j$ such that $r(i)=r(j)$, we slightly modify the construction of block partition extensions to the concept of a collision-resistant block partition extension. 
For the sake of presentation, note that we use $2N$ output nodes in $G$ in the following definition. 
\begin{definition}
[Collision-Resistant Block Partition Extension]
Given a DAG $G=(V=[\alpha N],E)$ with $\alpha N$ nodes containing a set of $O=[(\alpha-2)N+1,\alpha N]$ output nodes of size $2N$ and a parameter $k$, let $O_i=[(\alpha-2)N+1+2ik, (\alpha-2)N+2(i+1)k]$ for $i\in\left[\frac{N}{k}\right]$ so that $\{O_i\}$ forms a partition of $O$. 
We define the \emph{collision-resistant block partition extension} of $G$, denoted $\crpartition_k(G)$, as a distribution of graphs $\G_{G,k}$. 
Each graph $G'$ sampled from $\G$ has vertices $V'=[(\alpha+1)N]$ and edges $E'=E\cup F$, where $F$ is defined as the edges $(i-1,i)$ and $(r(i),i)$ for each $i\in[\alpha N+1,(\alpha+1)N]$, where $r(i)$ is defined as follows:
\begin{enumerate}
\item
Let $\Enc$ be the family of all permutations of $\left[2k\right]$, so that for each fixed $j$, 
\[\{\Enc(j,\ell)\}_{\ell\in\left[2k\right]}=\left[2k\right].\]
\item
For each $i\in[\alpha N+1,(\alpha+1)N]$, let $j=i\mod{\frac{N}{k}}$ and define $1\le p\le k$ to be the unique integer such that $i=\frac{N}{k}(p-1) +\alpha N i+j$. 
Then we define $r(i)=(\alpha-2)N+1+2jk+\Enc(x\circ j,p)$, so that $r(i)\in O_j$. 
\end{enumerate}
\end{definition}
Observe that the collision-resistant block partition extension is a gadget that yields a dMHF, since the parent function $r(i)$ has the key $x\circ j$ to its permutation function $\Enc$. 
Hence, the underlying dynamic graph differs across different input values $x$. 

Then our construction of the ciMHF appears in \figref{fig:graph:permutation} and \figref{fig:final:graph}. 
As before, the construction consists of four layers. 
The first layer consists of a grates graph with $2N$ nodes, $G_1=\grates_{2N,\eps}$. 
The second layer consists of a superconcentrator overlay with $\O{N}$ nodes with $2N$ input nodes and $2N$ output nodes, so that $G_2=\superconc(G_1)$. 
The third layer consists of a grates overlay with $\O{N}$ nodes including $2N$ output nodes, so that $G_3=\grates_{\eps}(G_2)$. 
The $2N$ output nodes of $G_3$ are partitioned into $\frac{N}{k}$ blocks, each with $2k$ nodes, which allows the final layer to be a $2k$-restricted grpah. 
In particular, the fourth layer uses a collision-free block partition extension rather than the block partition extension of \figref{fig:graph:random}. 
\begin{figure}[!htb]
\begin{mdframed}
Sampling algorithm, for $k=\Omega(N^\eps)$:
\begin{enumerate}
\item
$G_1=\grates_{2N,\eps}$
\item
$G_2=\superconc(G_1)$ 
\item
$G_3=\grates_{\eps}(G_2)$
\item
$G_4\sim\crpartition_k(G_3)$
\end{enumerate}
\end{mdframed}
\vspace{-0.5cm}
\caption{Second attempt at ciMHF. Each parent $r(i)$ is chosen by a permtuation of the labels in specific block corresponding to $i$. The underlying graph is visualized in \figref{fig:final:graph}.}
\figlab{fig:graph:permutation}
\end{figure}

\begin{figure}
\centering
\begin{tikzpicture}
\draw (-0.5,-0.5) rectangle +(9,2.5);
\node at (6.5,0.7){$\crpartition$};

\draw (0,0) circle [radius=0.3];
\draw (0.4,0)[->] -- +(0.7,0);
\draw (1.5,0) circle [radius=0.3];
\draw (1.9,0)[->] -- +(0.7,0);
\node at (3,0){$\ldots$};
\draw (3.4,0)[->] -- +(0.7,0);
\draw (4.5,0) circle [radius=0.3];

\draw (0,1.5) circle [radius=0.3];
\draw (0.4,1.5)[->] -- +(0.7,0);
\draw (1.5,1.5) circle [radius=0.3];
\draw (1.9,1.5)[->] -- +(0.7,0);
\node at (3,1.5){$\ldots$};
\draw (3.4,1.5)[->] -- +(0.7,0);
\draw (4.5,1.5) circle [radius=0.3];

\draw (3,1.3)[->] -- (3,0.2);
\draw (2.5,1.3) to [out=270,in=180] (3,0.75);
\draw (3,0.75)[->] to [out=0,in=90] (3.5,0.2);
\draw (3.5,1.3) to [out=270,in=0] (3,0.75);
\draw (3,0.75)[->] to [out=180,in=90] (2.5,0.2);

\draw (-3,1) rectangle +(9,2.5);
\node at (-2,2.2){$\grates_{\eps}$};

\draw (0,3) circle [radius=0.3];
\draw (0.4,3)[->] -- +(0.7,0);
\draw (1.5,3) circle [radius=0.3];
\draw (1.9,3)[->] -- +(0.7,0);
\node at (3,3){$\ldots$};
\draw (3.4,3)[->] -- +(0.7,0);
\draw (4.5,3) circle [radius=0.3];

\draw (3,2.8)[->] -- (3,1.7);
\draw (2.5,2.8) to [out=270,in=180] (3,2.25);
\draw (3,2.25)[->] to [out=0,in=90] (3.5,1.7);
\draw (3.5,2.8) to [out=270,in=0] (3,2.25);
\draw (3,2.25)[->] to [out=180,in=90] (2.5,1.7);

\draw (-0.5,2.5) rectangle +(9,2.5);
\node at (7,3.7){$\superconc$};

\draw (0,4.5) circle [radius=0.3];
\draw (0.4,4.5)[->] -- +(0.7,0);
\draw (1.5,4.5) circle [radius=0.3];
\draw (1.9,4.5)[->] -- +(0.7,0);
\node at (3,4.5){$\ldots$};
\draw (3.4,4.5)[->] -- +(0.7,0);
\draw (4.5,4.5) circle [radius=0.3];

\draw (3,4.3)[->] -- (3,3.2);
\draw (2.5,4.3) to [out=270,in=180] (3,3.75);
\draw (3,3.75)[->] to [out=0,in=90] (3.5,3.2);
\draw (3.5,4.3) to [out=270,in=0] (3,3.75);
\draw (3,3.75)[->] to [out=180,in=90] (2.5,3.2);

\draw (-3,4) rectangle +(9,2.5);
\node at (-2,5.2){$\grates$};

\draw (0,6) circle [radius=0.3];
\draw (0.4,6)[->] -- +(0.7,0);
\draw (1.5,6) circle [radius=0.3];
\draw (1.9,6)[->] -- +(0.7,0);
\node at (3,6){$\ldots$};
\draw (3.4,6)[->] -- +(0.7,0);
\draw (4.5,6) circle [radius=0.3];

\draw (3,5.8)[->] -- (3,4.7);
\draw (2.5,5.8) to [out=270,in=180] (3,5.25);
\draw (3,5.25)[->] to [out=0,in=90] (3.5,4.7);
\draw (3.5,5.8) to [out=270,in=0] (3,5.25);
\draw (3,5.25)[->] to [out=180,in=90] (2.5,4.7);
\end{tikzpicture}
\caption{Final construction of \figref{fig:graph:permutation}.}
\figlab{fig:final:graph}
\end{figure}

We now show that the construction of \figref{fig:graph:permutation} has cumulative cost $\Omega(N^2)$ with high probability. 
The proof is almost verbatim to \thmref{thm:cc:graph:random} except that the graph overlays now have $2N$ input and output nodes. 
\begin{theorem}
\thmlab{thm:cc:permutation}
Let $0<\eps<1$ be a constant and $k=\Omega(N^\eps)$. 
Let $\G$ be drawn from the distribution of $2k$-restricted graphs in \figref{fig:graph:permutation}. 
There exist constants $c_1>0$ and $c_2\in(0,1)$ such that for any dynamic pebbling strategy $S$,
\[\PPPr{G\in\G}{\cc(S,G)>c_1N^2}\ge 1-c_2^{N/k}.\]
\end{theorem}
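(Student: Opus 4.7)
The plan is to mirror the proof of \thmref{thm:cc:graph:random} almost verbatim, modifying only the two places where the construction differs: (i) all invocations of \thmref{thm:grates} now apply to $\grates_{2N,\eps}$ (so the depth-robustness parameters and the count of $c$-good outputs scale with $2N$), and (ii) the probabilistic bound on ``missed costly indices'' must be redone for the permutation-based parent selection of \crpartition\ instead of uniform i.i.d.\ sampling. Concretely, fix constants $x,y,c,\gamma > 0$ (depending on $\eps$) so that $G_1 = \grates_{2N,\eps}$ is $(2xN,\; y(2N)^{1-\eps})$-depth robust and so that for every $S$ with $|S| \le 2\gamma N$ at least $2cN$ output nodes of $G_3$ remain $c$-good with respect to $S$. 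Fix any pebbling strategy $S$, let $t_i$ be the first time node $i$ is pebbled, and ask whether there is a ``bad time'' $t_\bad$ with $t_i \le t_\bad < t_{i+1}$ and $|P_{t_\bad}| < \gamma N/4$. If no such $t_\bad$ exists then $\gamma N/4$ pebbles sit on the graph throughout the $\Omega(N)$ steps of the final layer and the cumulative cost is already $\Omega(N^2)$, so we may assume such a time exists.

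The essential new step is bounding the probability that a subinterval $[i, i+s]$ of length $s = N/k$ in the final layer contains no missed costly index. In the ideal cipher model each $\Enc(x \circ j, \cdot)$ is a uniformly random permutation of $[2k]$, and permutations across distinct keys $x \circ j$ are independent. A length-$s$ subinterval hits each residue $j \in [N/k]$ at most once, so exactly one parent is drawn from each block $O_j$, and that parent is marginally uniform on $O_j$. Writing $c_j$ for the number of costly indices in block $O_j$ (so $\sum_j c_j \ge 2cN$ from the adjusted \thmref{thm:grates}), cross-block independence and AM--GM give
\[
p \;\le\; \prod_{j=1}^{N/k}\left(1 - \frac{c_j}{2k}\right) \;\le\; \left(1 - \frac{1}{2N}\sum_{j=1}^{N/k} c_j\right)^{\!N/k} \;\le\; (1-c)^{N/k}.
\]
A union bound over at most $N$ candidate bad times then ensures that with probability at least $1 - c_2^{N/k}$ for some $c_2 \in (0,1)$, every bad time within the final layer is accompanied by a missed costly index in its associated subinterval.

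Conditioned on this event, the rest of the argument transfers directly from \thmref{thm:cc:graph:random}: via \lemref{lem:sc:repebble} and the definition of $c$-good, any missed costly index forces $\Omega(N)$ input nodes of $G_3$ to be repebbled, which in turn forces $2xN - 2cN - |P_{t_\bad}| = \Omega(N)$ nodes of $G_1$ to be repebbled at cost $\Omega(N \cdot N^{1-\eps}) = \Omega(N^{2-\eps})$. Partitioning the final $N$ nodes into $k$ disjoint length-$s$ subintervals and applying the dichotomy (maintain $\gamma N/4$ pebbles, or pay $\Omega(N^{2-\eps})$) per subinterval gives total cumulative cost at least $\min\!\bigl(\gamma N^2/4,\; \Omega(k N^{2-\eps})\bigr) = \Omega(N^2)$, since $k = \Omega(N^\eps)$. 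The only non-routine step is the probabilistic calculation above, and even there the distinct-key structure of \crpartition\ makes cross-block independence immediate, while the uniform marginal of a single coordinate of a uniformly random permutation handles the per-block bound; the main obstacle is really just confirming that these two structural properties of $\Enc$ survive the transition from i.i.d.\ sampling to permutation sampling.
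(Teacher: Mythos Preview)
Your proposal is correct and mirrors the paper's approach exactly: the paper simply states that the proof of \thmref{thm:cc:graph:random} carries over verbatim once the overlays are scaled to $2N$ input/output nodes. Your additional care in verifying that the permutation-based sampling of \crpartition\ preserves the marginal uniformity and cross-block independence needed for the AM--GM bound, together with your explicit union bound over bad positions, fill in details the paper leaves implicit.
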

\begin{corollary}
Let $\G$ be drawn from the distribution of $2k$-restricted graphs in \figref{fig:graph:permutation}, for $k=\Omega(N^\eps)$. 
Then $\cc(\G)=\Omega(N^2)$.
\end{corollary}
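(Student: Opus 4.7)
The plan is to derive this corollary as a direct consequence of \thmref{thm:cc:permutation}. By definition, $\cc(\G) = \min_S \cc(S,\G) = \min_S \EEx{G\sim\G}{\cc(S,G)}$, so it suffices to show that for every dynamic pebbling strategy $S$, the expected cost $\EEx{G\sim\G}{\cc(S,G)}$ is $\Omega(N^2)$.

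First, I would fix an arbitrary dynamic pebbling strategy $S$ and invoke \thmref{thm:cc:permutation}, which yields constants $c_1 > 0$ and $c_2 \in (0,1)$ (independent of $S$) such that
\[\PPPr{G\in\G}{\cc(S,G) > c_1 N^2} \ge 1 - c_2^{N/k}.\]
Applying Markov's inequality in reverse (i.e., $\Ex{X} \ge t \cdot \PPr{X \ge t}$ for nonnegative $X$) with $X = \cc(S,G)$ and $t = c_1 N^2$, I obtain
\[\EEx{G\sim\G}{\cc(S,G)} \ge c_1 N^2 \cdot \left(1 - c_2^{N/k}\right).\]

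Next I would argue that the factor $1 - c_2^{N/k}$ is bounded below by a positive constant. Since $k \le N$ we have $N/k \ge 1$, and since $c_2 \in (0,1)$ this gives $c_2^{N/k} \le c_2 < 1$, so $1 - c_2^{N/k} \ge 1 - c_2 > 0$. Hence
\[\cc(S,\G) = \EEx{G\sim\G}{\cc(S,G)} \ge c_1(1-c_2) N^2 = \Omega(N^2),\]
where the hidden constant depends only on $c_1, c_2$ and not on $S$. Taking the minimum over all strategies $S$ yields $\cc(\G) = \Omega(N^2)$, as desired.

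I do not anticipate any serious obstacle here: all the real work was carried out in \thmref{thm:cc:permutation}, which established the high-probability lower bound on the cumulative pebbling cost of any dynamic pebbling strategy. The only nontrivial point is to verify that the high-probability guarantee translates into an expectation lower bound uniformly over all strategies, which follows from the elementary inequality $\Ex{X} \ge t \cdot \PPr{X \ge t}$ together with the observation that $N/k \ge 1$ for $k \le N$. In particular, this step works even in the regime where $N/k$ does not grow (e.g., when $k = \Theta(N)$), since a constant lower bound on $1 - c_2^{N/k}$ already suffices for the asymptotic claim.
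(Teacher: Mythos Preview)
Your proposal is correct and takes essentially the same approach as the paper, which states the corollary without proof as an immediate consequence of \thmref{thm:cc:permutation}. Your elementary expectation bound $\Ex{X}\ge t\cdot\PPr{X\ge t}$ together with the observation $N/k\ge 1$ is exactly the kind of one-line justification the paper implicitly relies on.
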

Finally, we observe that the construction of \figref{fig:graph:permutation} is amenable to shuffling since it satisfies the properties of \defref{def:shuffling}.
\section{Implementation of ciMHF}
\seclab{sec:implementation}
In this section, we describe how to implement our construction in a way that is computationally data-independent. 
We first formalize the notion of computationally data-independent and then describe the system model we utilize. 

\subsection{Computationally Data-Independent MHF (ciMHF) and Systems Model}
We define the security of a computationally data-independent memory hard function in terms of the following game: a side-channel attacker $\A$ selects two inputs $x_0,x_1$ and sends these inputs to an honest party $\H$. 
We first require the following definition of leakage patterns.

\paragraph{Leakage Pattern.}
We define the leakage pattern of an evaluation algorithm $\mhfeval$ by the sequence of request and store instructions made in each round. 
Specifically, in each round $r$, an attacker can observe from the leakage pattern the blocks of memory to be loaded into cache, as requested by $\mhfeval$. 
Let $i=(i_1,\ldots,i_m)$ be the sequence of locations of all blocks requested by $\mhfeval$ in a particular round $r$ through some command $\load(i)$.  
If $i$ is completely contained in cache, then no events will be observed by the attacker. 
Otherwise, if $i$ is not completely contained in cache, we use $\request_r$ to denote the locations of the blocks in memory, as well as their sizes, requested by $\mhfeval$ in round $r$. 
Similarly, we use $\store_r$ to denote the locations of the blocks, as well as their sizes, stored into memory by $\mhfeval$ in round $r$.  
We do not allow the attacker to observe the contents of the requested or stored blocks. 
Formally, the leakage pattern $\LP$ is the information $\{(\request_r,\store_r)\}_{r=1}^t$ and is dependent on the algorithm $\mhfeval$, random oracle $H$, internal randomness $R$, and input value $x$. 

\paragraph{Computationally Data-Independency Game.}
$\H$ runs a (randomized) evaluation algorithm $\mhfeval$ on both inputs $x_0$ and $x_1$, yielding two leakage patterns $\LP_0$ and $\LP_1$, where $\LP_i$ for $i\in\{0,1\}$ depends on both the input $x_i$ and the random coins selected during the execution of $\mhfeval$. 
$\H$ then picks a random challenge bit $b \in \{0,1\}$ and sends $\LP_b,\LP_{1-b}$ to $\A$ to simulate a side-channel. 
The goal of $\A$ is to predict $b$ i.e., match each input with the corresponding leakage pattern. 
For a secure ciMHF we guarantee that {\em any} PPT side-channel attacker $\A$ wins the game with only negligible advantage over random guessing. 

Formally, the game consists of three phases \setup, \challenge, and \guess, which are described as follows.  
\begin{mdframed}
\textbf{Data independency game for ciMHF:}
\begin{description}
\item[\setup]
In this phase, $\A$ selects the security parameter $ \lambda $ and two challenge messages $x_0$ and $x_1$ and sends them to $\H$. 
Here we assume without loss of generality that the runtime of the evaluation algorithm $\mhfeval$ on $x_0$ and $x_1$ are the same. 

\item[\challenge]
In this phase, $\H$ selects a random bit $b\in\{0,1\}$ and random coins $R_0, R_1\in\{0,1\}^{\lambda}$ uniformly at random and then samples $\lp_0 \gets \LP(\mhfeval(x_0; R_0)) $ and $ \lp_{1} \gets \LP(\mhfeval(x_1; R_1)) $.
$\H$ sends the ordered pair $(\lp_b,\lp_{1-b})$ to $\A$. 

\item[\guess]
After receiving $(\lp_b, \lp_{1-b})$, the adversary $\A$ outputs $b'$ as a guess for $b$. 
The adversary wins the game if $b=b'$.
\end{description}
\end{mdframed}

The advantage of the adversary to win the game of computationally data independency of the given MHF is defined as 
\begin{align*}
Adv_{\A, \MHF}^{\mathsf{ind-lp-iMHF}} = \left|\frac{1}{2}-\Pr[\A(x_0, x_1,\lp_b,\lp_{1-b})=b': b=b')]\right|,
\end{align*}
where $\lp_i=\LP(x_i; \mhfeval(x_i; R_i))$.

\begin{definition}[Computational data independency] \deflab{singleround}
An evaluation algorithm $\mhfeval$ is computationally data independent if for all non-uniform circuits $\A= \{\A_{\lambda} \}_{\lambda \in\mathbb{N}}$, there is a negligible function $\negl(\cdot)$ such that $Adv_{\A,\MHF}^{\mathsf{ind-lp-iMHF}}<\negl(\lambda)$.
\end{definition} 
With the proper random coins, a memory-hard function with an evaluation algorithm that satisfies the above definition reveals only a negligible amount of information through its leakage patterns, and we thus call such a function a computationally data-independent memory hard function. 

\paragraph{On the Definition of Computational Data Independency.}  In section \secref{multiround} we show that the definition is equivalent to a multi round version of the game in which the attacker can adaptively select the challenge $x_{i,0},x_{i,1}$ in each round $i \leq r$ after observing $\lp_{i-1,b},\lp_{i-1,1-b}$ --- the memory access patterns from the last round. We also prove that the two security notions are asymptotically equivalent when the attacker runs in polynomial time --- in terms of concrete security parameters we lose a factor of $r$ (number of challenge rounds) in the reduction. 

We are primarily motivated by the password hashing application where the inputs $x_0$ and $x_1$ come from a small domain, as user selected passwords tend to have low entropy~\cite{SP:Bonneau12}. In practice it is reasonable to assume that $r$ is polynomial i.e., if the user only authenticates $\poly(\lambda)$ times then there are at most $r=\poly(\lambda)$ memory access patterns for the attacker to observe. Assuming that the input domain has size $\poly(\lambda)$ a brute-force attacker cannot use the leaked memory access pattern on input $x$ to eliminate any candidate password $x'$ with high probability, otherwise the attacker could have used the pair $x$ and $x'$ to win the data independency game. 

However, in settings where the input domain is very large and $r$ is super-polynomial it will be better to adopt a concrete security definition (see \secref{multiround}). The asymptotic definition in \defref{singleround} does not definitively rule out the possibility that an attacker can substantially narrow the search space after many (super-polynomial) side channel attacks. For example, suppose that the attacker gets to observe $\lp_i \gets \LP(\mhfeval(x; R_i))$ for $i = 1,\ldots, 2^\lambda$, i.e., $2^\lambda$ independent evaluations of $\MHF$ on secret input $x$. Supposing that $Adv_{\A,\MHF}^{\mathsf{ind-lp-iMHF}}=2^{-\lambda}$ and that the input domain for $\MHF$ has size $2^{2\lambda}$, it is possible that each $\lp_i$ allows the attacker to eliminate a random subset of $Adv_{\A,\MHF}^{\mathsf{ind-lp-iMHF}} \times 2^{2\lambda}=2^\lambda$ candidate inputs, allowing the attacker to find $x$ after just $\O{\lambda 2^{\lambda}}$ examples. However, in practice it will usually be reasonable to assume that the attacker gets to observe $\lp_i $ a polynomial number of times i.e., the honest party will execute $\LP(\mhfeval(x; R_i))$ at most $\poly(\lambda)$ times.

\paragraph{Memory Architecture Assumptions.}
We consider a tiered random access memory architecture with main memory (RAM) and working memory (cache). 
We assume that main memory (RAM) is a shared resource with other untrusted processes, each of which have their own cache.  
Although the operating system kernel will enforce memory separation, i.e., only our program has some region of memory and that other processes cannot read/write to this block, it is also possible that an untrusted process will be able to infer the memory address of read/write operations in RAM (due to side-channel effects). 

Formally, the system allows programs access to two operations $\Write(i,x)$, which takes an address $i$ within the memory allocated to the program and writes the value $x$ at address $i$, and $\Read(i)$ which loads the data at location $i$. 
When an operation requests memory at location $i$, there are two possible outcomes. 
Either the data item is already in cache or the data item is not in cache. 
In the second case, the location of the item in memory is revealed through the leakage pattern. 
Hence, the leakage pattern is either $\bot$, if the data item is already in cache, or $i$, if the data item is not in cache.

\paragraph{Cache Replacement Policy.}
We now show that our implementation of the dynamic pebbling construction with cumulative memory cost $\Omega(N^2)$ is computationally data-independent. 
In particular, we provide an evaluation algorithm whose leakage pattern is computationally indistinguishable under each of the following cache replacement policies:
\begin{description}
\item[Least recently used (LRU)]
This policy tracks the most recent time each item in cache was used and discards the least recently used items first when cache is full and items need to be replaced.
\item[First in first out (FIFO)]
This policy evicts the first item that was loaded into cache, ignoring how recent or how often it has been accessed.
\end{description}

\subsection{ciMHF Implementation}
Recall from the definition of a graph labeling in \defref{def:labeling} that given a function $H$ and a distribution of dynamic graphs $\G$, the goal is to compute $f_{\G,H}(x)$ for some input $x$, which is equivalent to $f_{G,H}(x)$ once the graph $G$ has been determined by the choice of $H$ and $x$.  
In this section, we describe a computationally data-independent implementation of the construction of \secref{sec:cr}. 

We implement the first three layers as data-independent components. 
Namely, the grates graph $G_1$, its superconcentrator overlay $G_2$ and the subsequent grates overlay $G_3$ can be implemented deterministically. 
Observe that $G_3$ has $\alpha N$ nodes, including $2N$ output nodes $O$ that are partitioned into $\frac{N}{k}$ blocks of size $2k$ each. 
Specifically $O=O_1\cup\ldots\cup O_{\frac{N}{k}}$, where $O_j=[(\alpha-2)N+1+2jk, (\alpha-2)N+2(j+1)k]$ for $j\in\left[\frac{N}{k}\right]$. 

As stated in \figref{fig:graph:permutation}, the collision-resistant block partition extension $G_4$ is actually data-independent, since for each $i\in [(\alpha-1)N,\alpha N]$, each parent $r(i)$ of $i$ is chosen uniformly at random from $k$ possible nodes, but the random procedure is independent of the input $x$. 
Hence, the challenge is to implement a computationally data-independent version of the collision-resistant block partition extension. 
We demand the input of a key $K$ for each computation of $f_{G,H}(x)$. 
The value of $f_{G,H}(x)$ remains the same across all keys $K$ but the leakage pattern is different for each $K$. 

\paragraph{Data-Dependent Dynamic Graph.} 
For each $i\in [\alpha N+1,(\alpha+1)N]$, let $j=i\mod{\frac{N}{k}}$. 
To implement a computationally data-independent version of $G_4$ from \figref{fig:graph:permutation}, we use the value of $L_{G,H,x}(i-1)$ to select a previously ``unused'' node of $O_j$ as the parent $r(i)$ for $i$ that is not $i-1$. 
Here, we say a node $v\in O_j$ is unused if it is not the parent of any node besides $v+1$ and $i$. 

Since $L_{G,H,x}(i-1)$ can be viewed as a random integer modulo $2k$, we can use $L_{G,H,x}(i-1)\pmod{2k}$ as an input to a permutation with key $K$ to randomly choose the parent of $i$ from $O_j$. 
Observe that $i=\alpha N+j$ is the first time a parent will be selected from $O_j$. 
Moreover, observe that $m=L_{G,H,x}(i-1)\pmod{2k}+1$ can be viewed as a random number from $[2k]$ so we set $r(i)=m+(\alpha-2)N+1+2jk$ as the $m\th$ entry of $O_j$. 

Now if $L_{G,H,x}(i-1)\pmod{2k}$ were all unique across the values of $\{i|i\mod{\frac{N}{k}}\equiv j\}$, then there would be no collisions among selections of parents in $O_j$ and we would be done. 
However, since these values are not unique, we must do a little more work to avoid collisions, which would reveal information through leakage patterns about the parents of two nodes being the same. 
To ensure there are no collisions in $O_j$ among parents, we store an array $U_j$ of size $2k$ for each block $O_j$. 
For each $1\le\ell\le 2k$, we initialize $U_j[\ell]=\ell$. 
The purpose of the $U_j$ array is to ensure that the nodes of $O_j$ that have already appeared as parents are at the end of $U_j$. 
In the above example when $r(i)=m+(\alpha-2)N+1+2jk$, we then set $U_j[m]=2k$ and $U_j[2k]=m$. 
Then in the next round of selecting a parent from $O_j$, we choose uniformly at random from the first $2k-1$ entries of $U_j$ and in general, for the $s\th$ round of selecting a parent from $O_j$, we choose uniformly at random from the first $2k-s+1$ entries of $U_j$. 

Specifically for some $2\le s\le k$, consider the $s\th$ iteration in which a parent is selected from $O_j$. 
That is, for $i=\alpha N + j + (s-1)\frac{N}{k}$, the parent $r(i)$ is the $s\th$ parent among the nodes of $O_j$. 
Observe that $m=L_{G,H,x}(i-1)\pmod{2k-s+1}$ can be viewed as a random number from $[2k-s+1]$ and so $U_j[m]$ is a random entry among the unselected $2k-s+1$ nodes of $O_j$. 
We then swap the values of $U_j[m]$ and $U_j[2k-s+1]$ so that if $U_j[m]=a$ and $U_j[2k-s+1]=b$ previously then we set $U_j[m]=b$ and $U_j[2k-s+1]=a$. 
Hence, the invariant remains that the first $2k-s$ locations of $U_j$ have been unused. 

\paragraph{Shuffling Leakage Patterns.} 
Finally, we point out that the leakage pattern across all computations of $f_{G,H}(x)$ is still the same, since we have not actually incorporated the key $K$ in any of the above details. 
In summary, the above the description ensures a collision-resistant block partition extension that is data-dependent, but is still vulnerable to side-channel attacks. 
Hence, we add a final element to our implementation that shuffles the locations of each node $p\in O_j$ inside $O_j$. 
That is, for each $1\le j\le\frac{N}{k}$, we use the keyed permutation $\Enc$ to store the label of $p\in O_j$ in the location that corresponds to $\Enc(j\circ K,p)$ instead. 
Thus if $r(i)=p\in O_j$ for some node $i$, the algorithm must look at the location associated with $\Enc(j\circ K,p)$ to learn the value of $L_{G,H,x}(p)$. 
Therefore, the underlying graph $G$ is a dynamic graph that is data-dependent but the leakage pattern across each computation of $f_{G,H}(x)$ is different due to the choice of $K$ that shuffles the locations of all labels in each block $O_j$. 
A high level example of this shuffling is shown in \figref{fig:partition:permutation}.

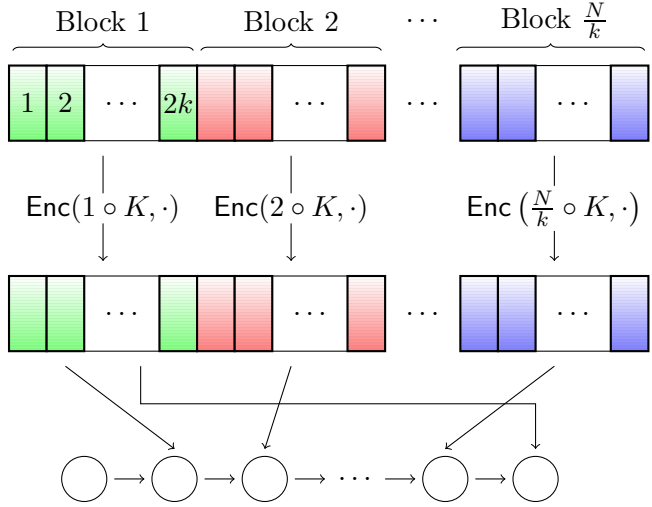
\begin{figure}
\centering
\begin{tikzpicture}
\filldraw[thick, top color=white,bottom color=green!50!] (0,2) rectangle+(0.5,1);
\filldraw[thick, top color=white,bottom color=green!50!] (0.5,2) rectangle+(0.5,1);
\draw (1,2) rectangle +(1,1);
\node at (0.25,2.5){$1$};
\node at (0.75,2.5){$2$};
\node at (1.5,2.5){$\ldots$};
\filldraw[thick, top color=white,bottom color=green!50!] (2,2) rectangle+(0.5,1);
\filldraw[thick, top color=white,bottom color=red!50!] (2.5,2) rectangle+(0.5,1);
\filldraw[thick, top color=white,bottom color=red!50!] (3,2) rectangle+(0.5,1);
\node at (2.25,2.5){$2k$};
\draw (3.5,2) rectangle +(1,1);
\node at (4,2.5){$\ldots$};
\filldraw[thick, top color=white,bottom color=red!50!] (4.5,2) rectangle+(0.5,1);
\node at (5.5,2.5){$\ldots$};
\filldraw[thick, top color=white,bottom color=blue!50!] (6,2) rectangle+(0.5,1);
\filldraw[thick, top color=white,bottom color=blue!50!] (6.5,2) rectangle+(0.5,1);
\draw (7,2) rectangle +(1,1);
\node at (7.5,2.5){$\ldots$};
\filldraw[thick, top color=white,bottom color=blue!50!] (8,2) rectangle+(0.5,1);

\draw [decorate,decoration={brace}](0.05,3.2) -- (2.45,3.2);
\node at (1.25,3.6){Block $1$}; 
\draw [decorate,decoration={brace}](2.55,3.2) -- (4.95,3.2);
\node at (3.75,3.6){Block $2$};
\node at (5.5,3.6){$\ldots$};
\draw [decorate,decoration={brace}](5.95,3.2) -- (8.45,3.2);
\node at (7.25,3.6){Block $\frac{N}{k}$};

\draw (1.25,1.8) -- (1.25,1.3);
\node at (1.25, 1.1){$\Enc(1\circ K,\cdot)$};
\draw (1.25,0.9)[->]  -- (1.25,0.4);
\draw (3.75,1.8) -- (3.75,1.3);
\node at (3.75, 1.1){$\Enc(2\circ K,\cdot)$};
\draw (3.75,0.9)[->]  -- (3.75,0.4);
\draw (7.25,1.8) -- (7.25,1.4);
\node at (7.25, 1.1){$\Enc\left(\frac{N}{k}\circ K,\cdot\right)$};
\draw (7.25,0.8)[->]  -- (7.25,0.4);

\filldraw[thick, top color=white,bottom color=green!50!] (0,-0.8) rectangle+(0.5,1);
\filldraw[thick, top color=white,bottom color=green!50!] (0.5,-0.8) rectangle+(0.5,1);
\draw (1,-0.8) rectangle +(1,1);
\node at (1.5,-0.3){$\ldots$};
\filldraw[thick, top color=white,bottom color=green!50!] (2,-0.8) rectangle+(0.5,1);
\filldraw[thick, top color=white,bottom color=red!50!] (2.5,-0.8) rectangle+(0.5,1);
\filldraw[thick, top color=white,bottom color=red!50!] (3,-0.8) rectangle+(0.5,1);
\draw (3.5,-0.8) rectangle +(1,1);
\node at (4,-0.3){$\ldots$};
\filldraw[thick, top color=white,bottom color=red!50!] (4.5,-0.8) rectangle+(0.5,1);
\node at (5.5,-0.3){$\ldots$};
\filldraw[thick, top color=white,bottom color=blue!50!] (6,-0.8) rectangle+(0.5,1);
\filldraw[thick, top color=white,bottom color=blue!50!] (6.5,-0.8) rectangle+(0.5,1);
\draw (7,-0.8) rectangle +(1,1);
\node at (7.5,-0.3){$\ldots$};
\filldraw[thick, top color=white,bottom color=blue!50!] (8,-0.8) rectangle+(0.5,1);

\draw (1,-2.5) circle [radius=0.3];
\draw (1.4,-2.5)[->] -- +(0.4,0);
\draw (2.2,-2.5) circle [radius=0.3];
\draw (2.6,-2.5)[->] -- +(0.4,0);
\draw (3.4,-2.5) circle [radius=0.3];
\draw (3.8,-2.5)[->] -- +(0.4,0);
\node at (4.6,-2.5){$\ldots$};
\draw (5,-2.5)[->] -- +(0.4,0);
\draw (5.8,-2.5) circle [radius=0.3];
\draw (6.2,-2.5)[->] -- +(0.4,0);
\draw (7,-2.5) circle [radius=0.3];

\draw (0.75,-1)[->] -- (2.2,-2.1);
\draw (3.75,-1)[->] -- (3.4,-2.1);
\draw (7.25,-1)[->] -- (5.8,-2.1);
\draw (1.75,-1)[->] -- (1.75,-1.5) -- (7,-1.5) -- (7, -2.1);
\end{tikzpicture}
\caption{Parent $r(i)$ is drawn uniformly at random from the nodes partitioned to each block.}
\figlab{fig:partition:permutation}
\end{figure}

Observe that this shuffling must be done completely in the cache to avoid leaking locations of labels during the shuffling. 
Hence, we require cache eviction policies such as the least recently used (LRU) or first in first out (FIFO) cache eviction policies to ensure that the entire block $O_j$ will remain in cache as the shuffling is performed. 
We describe the implementation in full in \figref{fig:implementation}. 

\begin{figure}[!htb]
\begin{mdframed}
Computationally data-independent sequential evaluation algorithm $\mhfeval(x;R)$ to compute $f_{G,H}(x)$ for any $k$-restricted dynamic graph $G$ that is amenable to shuffling.
\begin{enumerate}
\item
Data-independent phase:
\begin{enumerate}
\item
Let $G$ be a $k$-restricted dynamic graph with $\alpha N$ nodes for some constant $\alpha>1$ that is amenable to shuffling, $L$ be the last $N$ nodes of $G$, and $H$ be an arbitrary hash function. 
\item
Let $K=\mathsf{Setup}(1^{\lambda};R)$ be a hidden random permutation key for each computation of $f_{G,H}(x)$, given the security parameter $\lambda$ and random bits $R$. 
\item
Recall that the subgraph induced by the first $(\alpha-1)N$ nodes is a static graph. 
Compute the label $L_{G,H,x}(v)$ for each node $v\in(G-L)$.  
\end{enumerate}
\item
Shuffling phase:
\begin{enumerate}
\item
Since $G$ is amenable to shuffling, $L$ can be partitioned into groups $G_1,\ldots,G_{\frac{N}{k}}$ that satisfy the definition of \defref{def:shuffling}. 
For each $j\in\left[\frac{N}{k}\right]$, let $O_j=\potential(G_j)$.  
\item
For each $j\in\left[\frac{N}{k}\right]$, shuffle the contents of $O_j$:
\begin{enumerate}
\item
Let $v_1,\ldots,v_{k}$ be the vertices in $O_j$. 
\item
Load the labels $L_{G,H,x}(v_1),\ldots,L_{G,H,x}(v_{k})$ into cache. 
\item
Shuffle the positions of $L_{G,H,x}(v_1),\ldots,L_{G,H,x}(v_{k})$ so that for each $p\in[k]$, $L_{G,H,x}(v_p)$ is in the location that previously corresponded to $L_{G,H,x}(v_q)$, where $q=\Enc(j\circ K,p)$, where $\Enc$ is a keyed pseudorandom permutation of $k$. 
\end{enumerate}
\end{enumerate}
\item
Data-dependent phase:
\begin{enumerate}
\item
For each $j\in\left[\frac{N}{k}\right]$, initialize an array $U_j$ such that for all $1\le\ell\le k$, $U_j[\ell]=\ell$.
\item
For each $i=\alpha N+1$ to $(\alpha+1)N$:
\begin{enumerate}
\item
Let $j$ and $s$ be defined so that $1\le s\le k$ and $1\le j\le\frac{N}{k}$ given $i=\alpha N + j + (s-1)\frac{N}{k}$ and let $m=L_{G,H,x}(i-1)\pmod{k-s+1}+1$.
\item
Set $r(i)=U_j[m]+(\alpha-1)N+1+jk$ so that $r(i)\in O_j$ and load $L_{G,H,x}(r(i))$. 
(Recall that the label of $r(i)$ is actually located at the position where the label of node $\Enc(j\circ K,U[m])$ was previously located prior to the shuffling.)
\item
Load $L_{G,H,x}(r(i))$ and $L_{G,H,x}(i-1)$ and compute $L_{G,H,x}(i)=H(i\circ L_{G,H,x}(r(i))\circ L_{G,H,x}(i-1))$. 
\item
Let $U_j[U_j[m]]=a$ and $U_j[k-s+1]=b$. 
Then swap the values of $U_j$ at $U_j[m]$ and $k-s+1$ so that $U_j[U_j[m]]=b$ and $U_j[k-s+1]=a$.
\end{enumerate} 
\end{enumerate} 
\end{enumerate}
\end{mdframed}
\vspace{-0.5cm}
\caption{Description of evaluation algorithm for $k$-restricted graphs that are amenable to shuffling. Note that each computation of $f_{G,H}(x)$ requires as input random bits $R$ to generate the leakage patterns.}
\figlab{fig:implementation}
\end{figure}

\paragraph{A Note on Oblivious RAM.}
The complications with the cache eviction policies and shuffling leakage patterns originate from the necessity of not divulging information in the data-independency game. 
One reasonable question is whether these complications can be avoided with other implementations that conceal the leakage patterns. 
For example, an algorithm using oblivious RAM (ORAM), introduced by Goldreich and Ostrovsky~\cite{GoldreichO96}, reveals no information through the memory access patterns about the underlying operations performed. 
Thus, an algorithm using an ORAM data structure to evaluate a memory hard function would induce a computationally independent memory hard function, regardless of whether the underlying function is data-dependent or data-independent. 
\cite{GoldreichO96} describe an oblivious RAM simulator that transforms any program in the standard RAM model into a program in the oblivious RAM model, where the
leakage pattern is information theoretically hidden, which is ideal for the data-independency game.

Existing constructions of ORAM protocols such as Path ORAM~\cite{StefanovDSCFRYD18} require amortized $\Omega(\log N)$ bandwidth overhead. 
Hence given any dMHF and evaluation algorithm running in sequential time $M$, we can use ORAM to develop a new evaluation algorithm with a concealed leakage pattern, running in sequential time $N=M\log M$.
However, this is not ideal because the cumulative memory complexity of the dMHF is $\O{M^2}=\O{\frac{N^2}{\log^2 N}}$. 
Viewed in this way, the ciMHF construction is worse than known iMHF constructions that achieve CMC $\Omega\left(\frac{N^2}{\log N}\right)$ such as DRSample~\cite{AlwenBH17, BlockiHKLXZ19}. 
In fact, even for $k$-restricted graphs, we still obtain a blow-up of $\Omega(\log^2 K)$, which is $\Omega\left(\frac{\log^2 N}{\log^2\log N}\right)$ when $k=\Omega(N^{1/\log\log N})$. 
Otherwise for $k=o(N^{1/\log\log N})$, our dynamic pebbling attack in \corref{cor:attack} shows that the CMC is at most $o(N^2)$. 

Although Boyle and Naor~\cite{BoyleN16} proposed the notion of \emph{online} ORAM, where the operations to be performed arrive in an online manner, and observe that the lower bounds of~\cite{GoldreichO96} do not hold for online ORAM, Larsen and Nielsen~\cite{LarsenN18} answer this open question by proving an amortized $\Omega(\log N)$ bandwidth overhead lower bound on the bandwidth of any online ORAM. 
Therefore, it does not seem obvious how to use ORAM in the implementations of maximally hard ciMHFs.

\subsubsection{Implementation and Analysis.}
\begin{figure}[!htb]
\begin{mdframed}
Hybrid:
\begin{enumerate}
\item
Data-independent phase:
\begin{enumerate}
\item
Let $G$ be a $k$-restricted dynamic graph with $\alpha N$ nodes for some constant $\alpha>1$ that is amenable to shuffling, $L$ be the last $N$ nodes of $G$, and $H$ be an arbitrary hash function. 
\item
Let $K=\mathsf{Setup}(1^{\lambda};R)$ be a hidden random permutation key for each computation of $f_{G,H}(x)$, given the security parameter $\lambda$ and random bits $R$. 
\item
Recall that the subgraph induced by the first $(\alpha-1)N$ nodes is a static graph. 
Compute the label $L_{G,H,x}(v)$ for each node $v\in(G-L)$.  
\end{enumerate}
\item
Shuffling phase:
\begin{enumerate}
\item
Since $G$ is amenable to shuffling, $L$ can be partitioned into groups $G_1,\ldots,G_{\frac{N}{k}}$ that satisfy the definition of \defref{def:shuffling}. 
For each $j\in\left[\frac{N}{k}\right]$, let $O_j=\potential(G_j)$.  
\item
For each $j\in\left[\frac{N}{k}\right]$, shuffle the contents of $O_j$:
\begin{enumerate}
\item
Let $v_1,\ldots,v_{k}$ be the vertices in $O_j$. 
\item
Load the labels $L_{G,H,x}(v_1),\ldots,L_{G,H,x}(v_{k})$ into cache. 
\item
Shuffle the positions of $L_{G,H,x}(v_1),\ldots,L_{G,H,x}(v_{k})$ so that for each $p\in[k]$, $L_{G,H,x}(v_p)$ is in the location that previously corresponded to $L_{G,H,x}(v_q)$, where $q=\Enc(j\circ K,p)$, where $\Enc$ is a keyed truly random permutation of $k$. 
\end{enumerate}
\end{enumerate}
\item
Data-dependent phase:
\begin{enumerate}
\item
For each $j\in\left[\frac{N}{k}\right]$, initialize an array $U_j$ such that for all $1\le\ell\le k$, $U_j[\ell]=\ell$.
\item
For each $i=\alpha N+1$ to $(\alpha+1)N$:
\begin{enumerate}
\item
Let $j$ and $s$ be defined so that $1\le s\le k$ and $1\le j\le\frac{N}{k}$ given $i=\alpha N + j + (s-1)\frac{N}{k}$ and let $m=L_{G,H,x}(i-1)\pmod{k-s+1}+1$.
\item
Set $r(i)=U_j[m]+(\alpha-1)N+1+jk$ so that $r(i)\in O_j$ and load $L_{G,H,x}(r(i))$. 
(Recall that the label of $r(i)$ is actually located at the position where the label of node $\Enc(j\circ K,U[m])$ was previously located prior to the shuffling.)
\item
Load $L_{G,H,x}(r(i))$ and $L_{G,H,x}(i-1)$ and compute $L_{G,H,x}(i)=H(i\circ L_{G,H,x}(r(i))\circ L_{G,H,x}(i-1))$. 
\item
Let $U_j[U_j[m]]=a$ and $U_j[k-s+1]=b$. 
Then swap the values of $U_j$ at $U_j[m]$ and $k-s+1$ so that $U_j[U_j[m]]=b$ and $U_j[k-s+1]=a$.
\end{enumerate} 
\end{enumerate} 
\end{enumerate}
\end{mdframed}
\vspace{-0.5cm}
\caption{Description of hybrid. Differs from \figref{fig:implementation} in that the hidden input key is used to index into the entire family of random permutations, rather than a pseudorandom permutation.}
\figlab{fig:hybrid}
\end{figure}

We require the hybrid in \figref{fig:hybrid} to argue that our implementation of \figref{fig:graph:permutation} is a ciMHF. 
The hybrid in \figref{fig:hybrid} differs from the implementation of \figref{fig:graph:permutation} in \figref{fig:implementation} in that the hidden input key is used to index from the entire family of random permutations, rather than a pseudorandom permutation. 
Thus the only way an adversary can distinguish between the hybrid and the real world sampler is by distinguishing between a random permutation and a pseudorandom permutation. 
On the other hand, if an adversary fails to distinguish between the hybrid and the real world sampler, then the cumulative memory complexity of the implementation requires $\Omega(N^2)$ since the leakage pattern of the hybrid is statistically equivalent to the dMHF construction in \figref{fig:graph:permutation}, where each parent is chosen a priori using a permutation drawn uniformly at random.

\begin{theorem}
\thmlab{thm:shuffling:cimhf}
For each DAG $G$ that is amenable to shuffling, there exists a computationally data-independent sequential evaluation algorithm $\mhfeval(x;R)$ computing the function $f_{\G, H}$ in time $\O{N}$. 
\end{theorem}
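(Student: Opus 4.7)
The plan is to analyze the evaluation algorithm in \figref{fig:implementation} and establish three things: correctness (the algorithm outputs $f_{G,H}(x)$), running time ($\O{N}$), and computational data-independence per \defref{singleround}. Correctness and runtime are largely bookkeeping. The data-independent phase labels the static subgraph $G-L$, which has $\O{N}$ constant-indegree nodes, in time $\O{N}$. The shuffling phase touches each block $O_j$ once, costing $\O{k}$ per block and $\O{N}$ total. The data-dependent phase performs $N$ iterations of constant work, using the swap-bookkeeping in $U_j$ to ensure that the index $U_j[m]$ selected at the $s$\th\ visit to block $O_j$ is always one of the unused $k-s+1$ slots; combined with the ``No Collision for Parents'' and ``Same Potential Parents for Each Group'' clauses of \defref{def:shuffling}, this guarantees that the labels loaded are exactly the $k$ (distinct) potential parents of $G_j$, stored in the shuffled locations $\Enc(j\circ K, \cdot)$. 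Hence the algorithm computes the labels consistently with the dynamic graph $G$ sampled from $\G$, so its output equals $f_{\G,H}(x)$.

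\textbf{Data-independence via a hybrid argument.} The core of the proof is to upper bound $Adv_{\A,\MHF}^{\mathsf{ind-lp-iMHF}}$ by a negligible function. I would introduce the hybrid experiment of \figref{fig:hybrid}, in which the pseudorandom permutation $\Enc(j\circ K, \cdot)$ is replaced, for each $j\in[N/k]$, by an independent truly random permutation of $[k]$ indexed by the hidden key. Any PPT adversary whose distinguishing advantage between the real and hybrid experiments is non-negligible yields a PPT distinguisher against the pseudorandomness of $\Enc$, via a standard reduction: the reduction simulates \figref{fig:implementation} honestly but answers each permutation query via its PRP oracle. It therefore suffices to prove that the leakage distributions $\lp_0,\lp_1$ produced in the hybrid are \emph{statistically} identical for every pair $(x_0,x_1)$.

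\textbf{Statistical indistinguishability in the hybrid.} I would decompose the leakage pattern into three parts corresponding to the three phases. The data-independent phase induces a deterministic leakage pattern on the static graph $G-L$, which is identical for every input. The shuffling phase loads each block $O_j$ (of size $k$) into cache and permutes its contents there; since $|O_j|=k$ and the cache policy is LRU or FIFO with capacity at least $k$, no element of $O_j$ is evicted mid-shuffle, so the only RAM-observable events are the $k$ sequential reads and $k$ sequential writes of $O_j$, whose locations are input-independent thanks to the ``Uniform Size of Groups'', ``Potential Parents not in $L$'', and ``Different Potential Parents for Different Groups'' properties. For the data-dependent phase, fix a block $O_j$ and consider the sequence of positions it contributes to the leakage. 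At the $s$\th\ visit, the algorithm loads the storage slot corresponding to $\Enc(j\circ K, U_j[m_s])$, where $m_s$ depends on $L_{G,H,x}(i-1)$. The swap invariant guarantees that $U_j[m_1],\ldots,U_j[m_k]$ is some permutation of $[k]$ as a multiset; composing with the uniformly random $\Enc(j\circ K,\cdot)$ on $[k]$, the observed sequence of locations is a uniformly random permutation of the $k$ storage slots, independent of $x$. The ``Different Potential Parents for Different Groups'' property then ensures the leakage contributions from different blocks are over disjoint storage regions and thus independent.

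\textbf{Main obstacle.} The subtle step is the argument for the data-dependent phase: I must show that the \emph{joint} distribution over all $k$ accesses into block $O_j$ is a uniformly random permutation of the slots, regardless of the specific values $m_1,\ldots,m_k$ determined by $x$. The key lemma I would prove is that for any fixed sequence $(m_1,\ldots,m_k)$ satisfying $1\le m_s\le k-s+1$, the induced sequence $(\Enc(j\circ K, U_j[m_1]),\ldots,\Enc(j\circ K,U_j[m_k]))$, taken over uniformly random $\Enc(j\circ K,\cdot)$, is distributed as a uniformly random permutation of $[k]$. This follows because the $U_j$-bookkeeping implements a Fisher--Yates shuffle: the sequence $(U_j[m_1],\ldots,U_j[m_k])$ ranges over all permutations of $[k]$ with equal multiplicity as $(m_s)$ varies, and post-composing any fixed permutation with a uniformly random one yields a uniformly random permutation. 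Putting the three phases together, the leakage $\lp$ in the hybrid has a distribution that does not depend on $x$, so $\A$'s advantage in the hybrid game is exactly $0$. Combining with the PRP bound yields $Adv_{\A,\MHF}^{\mathsf{ind-lp-iMHF}}\le\negl(\lambda)$, completing the proof.
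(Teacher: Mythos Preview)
Your proposal is correct and follows essentially the same route as the paper: introduce the hybrid of \figref{fig:hybrid} with a truly random permutation in place of the PRP, argue that in the hybrid the leakage distribution is input-independent (so the adversary's advantage is zero there), and conclude via the PRP security of $\Enc$. You supply considerably more detail than the paper does---in particular the phase-by-phase decomposition and the Fisher--Yates style observation that the observed access sequence $(\Enc(j\circ K,U_j[m_s]))_s$ is a uniformly random permutation of the block slots regardless of the input-dependent choices $m_s$---whereas the paper essentially asserts that ``an attacker observing the leakage patterns of the hybrid has no advantage'' and moves on.
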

\begin{proof}
Consider the evaluation function in \figref{fig:implementation}. 
Observe that the hybrid in \figref{fig:hybrid} has the same distribution of leakage patterns as the dMHF of \figref{fig:graph:permutation}. 
Moreover, under the least recently used (LRU) or first in first out (FIFO) cache eviction policies, if $k$ is less than the size of the cache, then all the shuffling can be performed so an attacker observing the leakage patterns of the hybrid has no advantage in the data-independency game. 
Furthermore, the ciMHF implementation in \figref{fig:implementation} only differs from the hybrid in \figref{fig:hybrid} in the implementation of $\Enc$ as a pseudorandom permutation compared to a truly random permutation. 
Therefore, an attacker observing leakage patterns from the implementation in \figref{fig:implementation} only obtains a negligible advantage $\negl(\lambda)$ in the security parameter $\lambda$, in the data-independency game. 
Hence, the implementation of \figref{fig:implementation} is a ciMHF.
\end{proof}

\noindent
We now show that the evaluation function in \figref{fig:implementation} of the dMHF in \figref{fig:graph:permutation} is a maximally hard ciMHF. 
\begin{theorem}
\thmlab{thm:cc:cimhf}
Let $0<\eps<1$ be a constant and $k=\Omega(N^\eps)$. 
Then there exists a family $\G$ of $k$-restricted graphs with $\cc(\G) = \Omega(N^2)$ that is amenable to shuffling. 
Moreover, there exists a negligible value $\delta=\negl(N)$ such that $\cc_{\delta}(\G)=\Omega(N^2)$. 
\end{theorem}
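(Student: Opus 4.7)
The plan is to instantiate $\G$ as the distribution from \figref{fig:graph:permutation}, but with the block-size parameter chosen as $k' := c_0 N^\eps$ for a constant $c_0$ so small that $2k' \leq k$ for all sufficiently large $N$; this is possible because the hypothesis $k = \Omega(N^\eps)$ furnishes such a $c_0$. The construction then produces a $2k'$-restricted distribution, which, since $2k' \leq k$, is automatically $k$-restricted (a $k_1$-restricted graph is $k_2$-restricted for every $k_2 \geq k_1$).

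Amenability to shuffling follows from the observation at the end of \secref{sec:cr}: I would step through the seven conditions of \defref{def:shuffling} and check each against the construction. Uniform block sizes $|O_j| = 2k'$ hold by definition of the partition in $\crpartition_{k'}$; all nodes in the $j$th group of the final $N$ nodes share the single potential-parent set $O_j$, and distinct groups draw from disjoint $O_j$'s; the no-collision event $\unique_j$ holds identically true because the parent map inside each block is induced by a permutation of $[2k']$; and the first $(\alpha-1)N$ nodes (the grates graph $G_1$, its superconcentrator overlay $G_2$, and the grates overlay $G_3$) are deterministically defined and hence static.

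The two cumulative-cost claims both follow from \thmref{thm:cc:permutation} applied with block-size parameter $k'$, which provides constants $c_1 > 0$ and $c_2 \in (0,1)$ such that for every dynamic pebbling strategy $S$,
\[\PPPr{G \in \G}{\cc(S,G) > c_1 N^2} \geq 1 - c_2^{N/k'}.\]
Taking expectations yields $\cc(S,\G) \geq c_1 N^2 \bigl(1 - c_2^{N/k'}\bigr) = \Omega(N^2)$, and minimizing over $S$ gives $\cc(\G) = \Omega(N^2)$. For the high-probability statement, set $\delta := c_2^{N/k'}$; the definition of $\cc_\delta$ then gives $\cc(S,\G,\delta) \geq c_1 N^2$ for every $S$, hence $\cc_\delta(\G) \geq c_1 N^2$. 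Because $k' = \Theta(N^\eps)$ with $\eps < 1$, we have $N/k' = \Theta(N^{1-\eps})$, so $\delta = 2^{-\Omega(N^{1-\eps})}$ decays faster than every inverse polynomial in $N$ and is therefore negligible, as required.

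There is essentially no technical obstacle, since \thmref{thm:cc:permutation} already does the pebbling-theoretic heavy lifting and the amenability check is a direct unpacking of definitions. The only subtlety is the parameter juggling: the hypothesis $k = \Omega(N^\eps)$ permits $k$ as large as $\Theta(N)$, under which $c_2^{N/k}$ would be constant rather than negligible. Choosing $k' = \Theta(N^\eps)$ independently of the given $k$ and exploiting the monotonicity of $k$-restrictedness circumvents this and simultaneously ensures the negligible $\delta$ required for the $\cc_\delta$ bound.
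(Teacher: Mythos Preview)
Your proposal is correct and follows the same route as the paper: instantiate $\G$ via the construction of \figref{fig:graph:permutation}, appeal to the observation at the end of \secref{sec:cr} for amenability to shuffling, and invoke \thmref{thm:cc:permutation} for both the expectation and the high-probability cumulative-cost bounds. Your explicit choice $k' = \Theta(N^\eps)$ (so that $\delta = c_2^{N/k'} = 2^{-\Omega(N^{1-\eps})}$ is genuinely negligible) is in fact a bit more careful than the paper's own proof, which simply writes $\delta$ in terms of $N/k$ without isolating this case.
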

\begin{proof}
Consider the evaluation function in \figref{fig:implementation} of the dMHF in \figref{fig:graph:permutation}. 
For the sake of completeness, the full implementation is also shown in \figref{fig:implementation:max}. 
Since the construction of \figref{fig:graph:permutation} is amenable to shuffling, then the evaluation algorithm is a ciMHF by \thmref{thm:shuffling:cimhf}. 
Finally by \thmref{thm:cc:permutation}, $\cc(\G) = \Omega(N^2)$. 

In fact, \thmref{thm:cc:permutation} implies that for $G\in\G$ drawn uniformly at random and any pebbling strategy $S$, not only is $\EEx{G\sim\G}{\cc(S,G)}=\Omega(N^2)$, but also $\cc(S,G)=\Omega(N^2)$ with probability at least $1-c^{N/k}$ for some constant $0<c<1$.
Thus for $\delta=1-c^{N/k}$, we have $\cc(S,\G,\delta)=\Omega(N^2)$ for any pebbling strategy $S$ and so $\cc_{\delta}(\G)=\Omega(N^2)$. 
\end{proof}

For the sake of completeness, we give the evaluation algorithm for the maximally hard ciMHF in \figref{fig:implementation:max}.
\begin{figure}[H]
\begin{mdframed}
Computationally data-independent sequential evaluation algorithm $\mhfeval(x;R)$
\begin{enumerate}
\item
Let $H$ be an arbitrary hash function and $K=\mathsf{Setup}(1^{\lambda};R)$ be a hidden random permutation key for each computation of $f_{\G,H}(x)$, given the security parameter $\lambda$ and random bits $R$. Let $k=\Omega(N^\eps)$.  
\item
Data-independent phase:
\begin{enumerate}
\item
$G_1=\grates_{2N,\eps}$
\item
$G_2=\superconc(G_1)$ 
\item
$G_3=\grates_{\eps}(G_2)$, which has $\alpha N$ total nodes, including $2N$ output nodes $O=O_1\cup\ldots\cup O_{\frac{N}{k}}$, where $O_j=[(\alpha-2)N+1+2ik, (\alpha-2)N+2(j+1)k]$ for $j\in\left[\frac{N}{k}\right]$. 
\item
Compute the label $L_{G_3,H,x}(v)$ for each node $v\in G_3$.
\end{enumerate}
\item
Shuffling phase:
\begin{enumerate}
\item
Let $G$ be the graph $G_3$ appended with $N$ additional nodes, so that $V=[(\alpha+1)N]$, and edges $(i-1,i)$ for each $\alpha N+1\le i\le(\alpha+1)N$.
\item
For each $j\in\left[\frac{N}{k}\right]$, shuffle the contents of $O_j$:
\begin{enumerate}
\item
Let $v_1,\ldots,v_{2k}$ be the vertices in $O_j$. 
\item
Load the labels $L_{G,H,x}(v_1),\ldots,L_{G,H,x}(v_{2k})$ into cache. 
\item
Shuffle the positions of $L_{G,H,x}(v_1),\ldots,L_{G,H,x}(v_{2k})$ so that for each $p\in[2k]$, $L_{G,H,x}(v_p)$ is in the location that previously corresponded to $L_{G,H,x}(v_q)$, where $q=\Enc(j\circ K,p)$, where $\Enc$ is a keyed pseudorandom permutation of $2k$. 
\end{enumerate}
\end{enumerate}
\item
Data-dependent phase:
\begin{enumerate}
\item
For each $j\in\left[\frac{N}{k}\right]$, initialize an array $U_j$ such that for all $1\le\ell\le 2k$, $U_j[\ell]=\ell$.
\item
For each $i=\alpha N+1$ to $(\alpha+1)N$:
\begin{enumerate}
\item
Let $j$ and $s$ be defined so that $1\le s\le k$ and $1\le j\le\frac{N}{k}$ given $i=\alpha N + j + (s-1)\frac{N}{k}$ and let $m=L_{G,H,x}(i-1)\pmod{2k-s+1}+1$.
\item
Set $r(i)=U_j[m]+(\alpha-2)N+1+2jk$ so that $r(i)\in O_j$ and load $L_{G,H,x}(r(i))$. 
(Recall that the label of $r(i)$ is actually located at the position where the label of node $\Enc(j\circ K,U[m])$ was previously located prior to the shuffling.)
\item
Load $L_{G,H,x}(r(i))$ and $L_{G,H,x}(i-1)$ and compute $L_{G,H,x}(i)=H(i\circ L_{G,H,x}(r(i))\circ L_{G,H,x}(i-1))$. 
\item
Let $U_j[U_j[m]]=a$ and $U_j[2k-s+1]=b$. 
Then swap the values of $U_j$ at $U_j[m]$ and $2k-s+1$ so that $U_j[U_j[m]]=b$ and $U_j[2k-s+1]=a$.
\end{enumerate} 
\end{enumerate} 
\end{enumerate}
\end{mdframed}
\vspace{-0.5cm}
\caption{Description of implementation of \emph{maximally hard} ciMHF. Again note that each computation of $f_{G,H}(x)$ requires as input random bits $R$ to generate the leakage pattern.}
\figlab{fig:implementation:max}
\end{figure}

\subsubsection{Extension to Multiple Rounds.} \seclab{multiround}
Finally, we show that our ciMHF implementation is robust to multiple rounds of leakage by considering a data independency game where an adversary is allowed to submit and observe multiple adaptive queries before outputting a guess for the hidden challenge bit $b$.
The game again consists of the phases \setup, \challenge, and \guess, which are described as follows.  
\begin{mdframed}
\textbf{Adaptive data independency game for ciMHF:}
\begin{description}
\item[\setup]
In this phase, $\A$ selects the security parameter $\lambda$ and sends it to $\H$. 
$\H$ then selects a random bit $b\in\{0,1\}$.

\item[\challenge]
For each round $i=1,2,\ldots$, $\A$ chooses two adaptive query messages $x_{i,0}$ and $x_{i,1}$ and sends the query messages to $\H$. 
$\H$ selects random coins $R_{i,0}, R_{i,1}\in\{0,1\}^{\lambda}$ uniformly at random, samples $\lp_{i,0} \gets \LP(\mhfeval(x_{i,0}; R_{i,0})) $ and $ \lp_{i,1} \gets \LP(\mhfeval(x_{i,1}; R_{i,1}))$, and sends the ordered pair $(\lp_{i,b},\lp_{i,1-b})$ to $\A$. Note: $\A$ can pick $x_{i+1,0}$ and $x_{i+1,1}$ adaptively after observing the response $(\lp_{i,b},\lp_{i,1-b})$. 

\item[\guess]
The game ends when the adversary $\A$ outputs $b'$ as a guess for $b$. 
$\A$ wins the game if $b=b'$.
\end{description}
\end{mdframed}

As before, the advantage of the adversary to win the adaptive data independency game for ciMHF is:
\begin{align*}
Adv_{\A, \MHF}^{\mathsf{ind-mult-lp-iMHF}} = \left|\frac{1}{2}-\PPr{\mathcal{A}(\mathcal{T})=b': b=b')}\right|,
\end{align*}
where $\mathcal{T}$ is the transcript $\{x_{i,0},x_{i,1},\lp_{i,b},\lp_{i,1-b}\}$ and $\lp_{i,j}=\LP(x_{i,j}; \mhfeval(x_{i,j}; R_{i,j}))$.

\begin{definition}
We say an evaluation algorithm $\mhfeval$ has $(t,\eps)$-single security if any attacker running in time $t$ has at most advantage $\eps$ in the data independency game. 
Similarly, we say an evaluation algorithm $\mhfeval$ has $(t,r,\eps)$-adaptive security if any attacker running in time $t$ and making $r$ queries has at most advantage $\eps$ in the adaptive data independency game. 
\end{definition}
We conclude by noting the following relationship between single security and adaptive security, thus implying the security of our evaluation function in \figref{fig:implementation} of the dMHF in \figref{fig:graph:permutation} with respect to the adaptive data independency game. 
\begin{theorem}
$(t,\eps)$-single security implies $(t-\O{r\cdot\time(\mhfeval)}, r, r\eps)$-adaptive security.
\end{theorem}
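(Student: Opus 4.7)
The plan is to carry out a standard hybrid argument over the $r$ challenge rounds. Define hybrid distributions $H_0,H_1,\ldots,H_r$ where in $H_j$ the challenger responds to the first $j$ queries in the ``$b=0$'' ordering (sending $(\lp_{i,0},\lp_{i,1})$) and to the remaining $r-j$ queries in the ``$b=1$'' ordering (sending $(\lp_{i,1},\lp_{i,0})$), with fresh random coins $R_{i,0},R_{i,1}$ sampled every round. Then $H_r$ is distributed exactly as the adaptive game conditioned on $b=0$, while $H_0$ is distributed exactly as the adaptive game conditioned on $b=1$. Writing $p_j := \Pr[\A \text{ outputs } 0 \text{ in } H_j]$, the adaptive advantage of $\A$ equals $|p_r - p_0|/2$.

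Applying the triangle inequality $|p_r - p_0| \leq \sum_{j=0}^{r-1}|p_{j+1}-p_j|$, there must exist an index $j^{*}$ with $|p_{j^{*}+1} - p_{j^{*}}| \geq (2/r)\cdot\mathrm{Adv}^{\mathsf{ind\text{-}mult\text{-}lp\text{-}iMHF}}_{\A,\MHF}$. Since $\A$ is modeled as a non-uniform circuit, this index can be hardcoded into the reduction; this non-uniformity is exactly what buys the linear (rather than quadratic) loss in $r$, and is the main conceptual point to get right in the write-up.

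Given $j^{*}$, I would construct a single-round attacker $\B$ that runs $\A$ internally. For the first $j^{*}$ rounds, $\B$ samples coins $R_{i,0},R_{i,1}$ itself, invokes $\mhfeval$ twice locally to obtain the two leakage patterns, and returns them in the ``$b=0$'' ordering. When $\A$ issues its $(j^{*}+1)$-th query $(x_{j^{*}+1,0},x_{j^{*}+1,1})$, $\B$ forwards this pair as its own single-round challenge and relays the challenger's response verbatim to $\A$. For the final $r-j^{*}-1$ rounds, $\B$ again simulates the challenger on its own, this time in the ``$b=1$'' ordering. Finally $\B$ outputs whatever $\A$ outputs as its own guess $b'$.

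A one-line case analysis finishes the proof: when the external challenger's hidden bit is $b=0$, $\B$ perfectly simulates $H_{j^{*}+1}$ for $\A$, and when it is $b=1$, $\B$ simulates $H_{j^{*}}$. Hence $\Pr[\B\text{ wins}] = (p_{j^{*}+1} + 1 - p_{j^{*}})/2$ and $\B$'s single-round advantage is exactly $|p_{j^{*}+1}-p_{j^{*}}|/2 \geq \mathrm{Adv}^{\mathsf{ind\text{-}mult\text{-}lp\text{-}iMHF}}_{\A,\MHF}/r$. The overhead of $\B$ over $\A$ is at most $2(r-1)$ local invocations of $\mhfeval$ plus the coin sampling, i.e., $\O{r\cdot\time(\mhfeval)}$. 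Taking the contrapositive yields the claimed $(t-\O{r\cdot\time(\mhfeval)},r,r\eps)$-adaptive security from $(t,\eps)$-single security. The only subtlety worth flagging in the write-up is that the freshness of $(R_{i,0},R_{i,1})$ across rounds is what makes $H_j$ and $H_{j+1}$ differ in a single, independently distributed leakage pair, which is essential for the reduction to be perfect.
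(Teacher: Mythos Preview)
Your proposal is correct and follows essentially the same approach as the paper: a hybrid argument over the $r$ challenge rounds, with the reduction embedding the single-round challenge at the pivotal index and simulating all other rounds locally via $\O{r}$ calls to $\mhfeval$. The only cosmetic difference is that you interpolate between the $b=0$ and $b=1$ conditionings by fixing the ordering in each round, whereas the paper's hybrids randomize the first $i-1$ rounds and tie the remaining rounds to the challenge bit; both parameterizations yield the same $r\eps$ loss and the same running-time overhead.
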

\begin{proof}
Suppose that $\aadapt$ violates $(t-\O{r\cdot\time(\mhfeval)}, r, r\eps)$-adaptive security for the sake of contradiction. 
Without loss of generality we will assume that $\aadapt$ outputs $b'=b$ with probability greater than $\frac{1}{2} + r \eps$. 
We will use $\aadapt$ to construct an attacker $\asingle$ that violates  $(t,\eps)$-single security. 

We first define a sequence of $r$ hybrids in the adaptive data-independency game. 
In Hybrid $i$, the challenger $\H$ picks bits $b, b_1,\ldots,b_{i-1}$ uniformly at random and sets $b_{i}=b,b_{i+1}=b,\ldots,b_r=b$. In round $j$ when the attacker $\aadapt$ submits two strings $x_{j,0}$ and $x_{j,1}$, the challenger $\H$ samples $\lp_{j,0} \gets \LP(\mhfeval(x_{j,0}; R_{j,0})) $ and $ \lp_{j,1} \gets \LP(\mhfeval(x_{j,1}; R_{j,1}))$ and then responds with $\lp_{j,b_i},\lp_{j,1-b_i}$ instead of $\lp_{j,b}$ and $\lp_{j,1-b}$ i.e., the bit $b_i$ is used to permute the order of the responses in round $i$ instead of $b$. 

Observe that in Hybrid 1, $b_1=\ldots=b_r=b$, so that Hybrid $1$ is equivalent to the actual adaptive independency game. 
Similarly, in Hybrid $r$, the bits $b_1,\ldots,b_r$ are all picked independently so that $\aadapt$ working in Hybrid $r$ has no advantage i.e., the attacker guesses $b'=b$ correctly with probability at most $\PPr{b'=b~|~\mbox{Hybrid }r}=\frac{1}{2}$. 
We observe that the advantage of the attacker is 
\[\PPr{b'=b~|~\mbox{Hybrid 1}}-\frac{1}{2} = \PPr{b'=b~|~\mbox{Hybrid }1} - \PPr{b'=b~|~\mbox{Hybrid }r} = \Delta_2 + \ldots + \Delta_r \ , \]
where $\Delta_i = \PPr{b'=b~|~\mbox{Hybrid }i-1}-\PPr{b'=b~|~\mbox{Hybrid }i}$. 
By an averaging argument, we must have $\Delta_{i+1} > \epsilon$ for some $i<r$. 
The following observation will also be useful:
\[ \Delta_{i+1} = \frac{1}{2} \PPr{b'=b~|~\mbox{Hybrid }i} - \frac{1}{2} \PPr{b=b'~|~\mbox{Hybrid }i+1, b_i \neq b}.\]

\paragraph{Reduction.}
We now define $\asingle$ as follows: (1) $\asingle$ simulates $\aadapt$ along with the adaptive challenger $\H_{adaptive}$. $\asingle$ generates random bits $b_1,\ldots,b_{i-1}$ and sets $b_{i+1}=\ldots = b_r=b''$ for another random bit $b''$. 
In each round  $j \neq i$, when $\aadapt$ outputs a query $x_{j,0},x_{j,1}$, our attacker $\asingle$ simply computes $\lp_{i,0} \gets \LP(\mhfeval(x_{i,0}; R_{i,0})) $ and $ \lp_{i,1} \gets \LP(\mhfeval(x_{i,1}; R_{i,1}))$ and responds with $\lp_{i,b_i},\lp_{i,1-b_i}$. 
When $\aadapt$ outputs the query $x_{i,0},x_{i,1}$  in round $i$, $\asingle$ forwards this query to the challenger for the single stage challenger $\H_{single}$ and receives back $\lp_{i,b},\lp_{i,1-b}$ for an unknown bit $b$ selected by $\H_{single}$. 
Finally, when $\aadapt$ outputs a guess $b'$ (for $b''$) $\asingle$ outputs the same guess $b'$ (for $b$). 

\paragraph{Analysis.}
Notice that since $b''$ is just a bit selected uniformly at random and independent from $b$, then $\PPr{b'=b'' ~| b''=b} = \PPr{b'=b'' ~|~\mbox{Hybrid }i}$. 
Then from the above observation, we have $\PPr{b'=b'' ~| b''\neq b} = \PPr{b'=b''~|~\mbox{Hybrid }i+1, b_i \neq b''} =\PPr{b'=b''|~\mbox{Hybrid }i}- 2\Delta_{i+1}$. 
It follows that $\PPr{b'=b'' ~|~ b''=b} -\PPr{b'=b'' ~| b''\neq b} = 2\Delta_{i+1}$. 
Thus, the probability that $\asingle$ wins is 
\begin{align*}
\PPr{b'=b} &= \PPr{b''=b}\PPr{b'=b'' ~|~ b''=b} + \PPr{b'' \neq b} (1-\PPr{b'=b'' ~| b''\neq b})\\
&= \frac{1}{2}+\Delta_{i+1} > \epsilon.
\end{align*}
Furthermore, the running time of $\asingle$ is at most $t$.
This contradicts the assumption that the evaluation algorithm $\mhfeval$ has $(t,\eps)$-single security. 
Therefore, $(t,\eps)$-single security implies $(t-\O{r\cdot\time(\mhfeval)}, r, r\eps)$-adaptive security.
\end{proof}


\section*{Acknowledgements}
We would like to thank anonymous ITCS 2020 reviewers for thorough feedback on this paper. 
This research was supported in part by the National Science Foundation under award  \#1755708 and by IARPA under the HECTOR program. 
This work was done in part while Samson Zhou was a postdoctoral fellow at Indiana University.

\def\shortbib{0}
\bibliographystyle{alpha}
\bibliography{references}
\end{document}